\definecolor{boxgrey}{HTML}{F3F3F3}
\newcommand{\hlbox}[2]{%
  \begin{center}%
    \fcolorbox{white}{boxgrey}{%
      \parbox{.9\columnwidth}{\noindent \textbf{#1}. \textit{#2}}
    }%
  \end{center}%
}
\newtheorem{prop}{Proposition}
\newcommand*\patchAmsMathEnvironmentForLineno[1]{%
\expandafter\let\csname old#1\expandafter\endcsname\csname #1\endcsname
\expandafter\let\csname oldend#1\expandafter\endcsname\csname end#1\endcsname
\renewenvironment{#1}%
{\linenomath\csname old#1\endcsname}%
{\csname oldend#1\endcsname\endlinenomath}}% 
\newcommand*\patchBothAmsMathEnvironmentsForLineno[1]{%
\patchAmsMathEnvironmentForLineno{#1}%
\patchAmsMathEnvironmentForLineno{#1*}}%
\begin{document}

\title{Robust Reputation Independence in Ranking Systems for Multiple Sensitive Attributes}

\titlerunning{Robust Reputation Independence for Multiple Sensitive Attributes}        

\author{Guilherme Ramos \and Ludovico Boratto \and Mirko Marras}
\authorrunning{Ramos et al.}

\institute{
           G. Ramos \at University of Porto, Porto, Portugal \email{guilhermeramos21@gmail.com}           
           \and
           L. Boratto \at University of Cagliari, Cagliari, Italy 
           \email{ludovico.boratto@acm.org}     
           \and
           M. Marras \at \'EPFL, Lausanne, Switzerland 
           \email{mirko.marras@acm.org}            
}

\date{Received: 01 March 2021}

\maketitle

\begin{abstract}
Ranking systems have an unprecedented influence on how and what information people access, and their impact on our society is being analyzed from different perspectives, such as users' discrimination.
A notable example is represented by reputation-based ranking systems, a class of systems that rely on users' reputation to generate a non-personalized item-ranking, proved to be biased against certain demographic classes.
To safeguard that a given sensitive user's attribute does not systematically affect the reputation of that user, prior work has operationalized a reputation independence constraint on this class of systems. 
In this paper, we uncover that guaranteeing reputation independence for a single sensitive attribute is not enough. 
When mitigating biases based on one sensitive attribute (e.g., gender), the final ranking might still be biased against certain demographic groups formed based on another attribute (e.g., age). 
Hence, we propose a novel approach to introduce reputation independence for multiple sensitive attributes simultaneously. 
We then analyze the extent to which our approach impacts on discrimination and other important properties of the ranking system, such as its quality and robustness against attacks. 
Experiments on two real-world datasets show that our approach leads to less biased rankings with respect to multiple users' sensitive attributes, without affecting the system's quality and robustness. 

\keywords{Bias Mitigation \and Ranking Systems \and Reputation \and Robustness}

\end{abstract}

\section{Introduction}

Ranking systems are becoming a vital tool to access information on the Web, from search engines to recommender systems~\citep{gama2020new}. 
Given their role in our experience online, the results they produce must not harm users in any way. 
However, it is known that a biased ranking can lead to a loss of trust in the system \citep{PanHJLGG07}, and that 
a ranking can also hide a discrimination against users belonging to legally-protected classes~\citep{EkstrandBD19,DBLP:journals/umuai/BorattoUMUI21,DBLP:journals/umuai/BorattoIJAIED21}. 

Most of the effort in the literature on non-personalized rankings has been devoted on biases associated to the sensitive attributes of users being ranked ({\em user rankings} such as those ranking job candidates), by showing that minorities are under-exposed in a ranking. The impact of users' sensitive attributes in non-personalized {\em item rankings} is therefore still under-explored. 
When items are ranked, only the possible consequences for the item providers have been considered~\citep{MehrotraMBL018}. However, if item rankings are learned from user preferences and this learning is biased on users' sensitive attributes, those belonging to minority groups might be considered as less relevant by the system. 
\cite{RamosB20} studied the impact of demographic attributes on reputation-based ranking systems, a class of systems that ranks the items by weighting the users' ratings by their reputation in the system.
The reputation is computed by comparing a user's preference with that of the other users~\citep{medo2010effect,LiYHC12,saude2017robust}. 
Computing reputation by analyzing the user's conformance with respect to the community can be beneficial not only to provide rankings that better reflect the preferences of the community, but also when reputation is exploited to avoid attacks to the ratings, such as {\em bribing}~\citep{saude2017robust}. Indeed, anomalous ratings would be provided by a user with a low conformance (and, consequently, a low reputation), thus avoiding negative consequences for the platform. 
In case users of certain legally-protected groups systematically receive lower/higher reputation scores, the produced rankings would not reflect the community's preferences as a whole, and the rankings get polarized towards certain groups. 
This effect can have negative consequences on the end-users and on the platform, which might lose the trust of those unjustly recognized as having a low reputation. 
\cite{RamosB20} showed that the reputation scores are usually lower for minority demographic groups, and mitigated this bias by ensuring that reputation scores for users belonging to different legally-protected groups are statistically indistinguishable ({\em reputation independence}). 
However, they operate on a sensitive attribute individually and, as shown by~\cite{KleinbergMR17}, acting on groups characterized by a sensitive attribute does not necessarily provide guarantees to groups based on another sensitive attribute.  

Driven by this motivation, this paper investigates whether reputation independence on a single sensitive attribute (e.g., gender) provides, by extension, reputation independence when considering another sensitive attribute (e.g., age). 
We theoretically and experimentally show that this is not the case. 
Current definitions of reputation independence urge to be strengthened to preserve the rankings from biased notions of reputation on multiple sensitive attributes. 
For this reason, we introduce the novel concept of {\em reputation independence for multiple sensitive attributes} simultaneously. 
Given a set of sensitive attributes of the users and the classes each attribute can take, this concept guarantees that the reputation is independent from {\em all} these sensitive attributes. 

Besides ensuring that users do not interact with rankings that biasedly reflect the community's preferences, reputation-based ranking systems are often adopted to protect users from perspectives that go beyond ethical and societal aspects. One of the most important is for the system to be {\em robust} against attacks that might alter the ranking, such as bribing.
In bribing attacks, an entity gives incentives to users to change their or add ratings so that an item can increase its exposure in a ranking. 
Though a non-robust ranking algorithm has been proved to produce biased results because the items' ratings introduced through bribing lead to an essentially different ranking~\citep{RamosBC20}, no work has ever considered the opposite perspective and studied {\em if debiasing procedures might affect the ranking systems' robustness}. 
Therefore, in this paper, we also assess the robustness of the treated rankings under several attack types, to investigate whether protecting users from an ethical and societal perspective (e.g., reputation independence) can expose them to security risks.
Our results show that it is possible to protect the users by securing reputation independence from multiple sensitive attributes without affecting the system's robustness. 
Hence, a platform employing the proposed approach to shape their rankings can offer guarantees of less biased results and robustness for the users, increasing their trust to the platform. This, in pair with offering effective and accurate results, has clear benefits on the overall business of a platform. Concretely, we will show that it is possible to de-bias the reputation scores from the sensitive attributes of the users, thus producing rankings that better reflect the individual preferences, without affecting neither the robustness of the platform nor the ranking quality. Hence, by touching a beyond-accuracy perspective of the rankings (reputation  independence), we do not affect primary properties of the service offered to the user (robustness and ranking quality).

Specifically, our contributions can be summarized as follows:
\begin{itemize}
    \item We provide evidence, both theoretically and experimentally, that, for reputation independence to be guaranteed, it should cover multiple sensitive attributes of the users.
    \item We extend the existing notion of reputation independence and operationalize it, to embrace more than one sensitive attribute simultaneously, and study its complexity.
    \item We assess the extent to which our approach creates rankings based on less biased reputations and compare it against state-of-the-art reputation debiasing solutions; our approach ensures multi-attribute reputation independence. 
    \item We assess the robustness and the ranking quality achieved by our approach and compare it against state-of-the-art systems; while ensuring multi-attribute reputation independence, our approach does not sacrifice robustness and ranking quality.
\end{itemize}

The rest of the paper is structured as follows. 
Section~\ref{sec:related} presents related work and Section~\ref{sec:notation} introduces preliminary concepts. 
Then, Section~\ref{sec:single} analyzes the extent to which state-of-the-art reputation debiasing can produce unbiased rankings for different sensitive attributes. 
We present the concept of multi-attribute reputation independence in Section~\ref{sec:mit_mul} and study its impact on bias and robustness in Section~\ref{sec:evaluation}. 
Section~\ref{sec:impact} discusses how our approach is beneficial for online ecosystems, and Section~\ref{sec:conc} finally paints lines of future research. 

\section{Related work}\label{sec:related}
This section covers related work. 
We start by analyzing the impact of biased non-personalized rankings. 
Subsequently, we present literature regarding robustness in reputation-based ranking systems. 

\paragraph{Impact of biased non-personalized rankings.} 
Ranking composition strongly affects our perception of the quality of the ranked items. 
The eye-tracking study presented in~\cite{PanHJLGG07} showed that users trust the order in which the systems rank items, deeming as more relevant those on the top of the ranking. 
Nevertheless, most search engines exhibit bias in the ranking, which the users are not aware of~\citep{KulshresthaEMZG19}. 
Bias can emerge in multiple ways, such as during the learning phase of query-based ranking algorithms that analyze the items selected by the users via their clicks; to overcome this bias, unbiased learning-to-rank approaches have been proposed~\citep{joachims2017unbiased}.
These biases might take several forms, such as polarization towards political parties or providing more exposure to specific individuals according to their sensitive attributes~\citep{EkstrandBD19}. 
This last form of bias is currently receiving much attention since the bias associated with sensitive attributes might lead to undesired phenomena. 
For example, the system can end up discriminating against individuals belonging to minority groups of a legally-protected class~\citep{HajianBC16}. 
In non-personalized systems, these effects were mostly analyzed in people rankings. 
In~\cite{DiazMEBC20,SinghJ18,Zehlike020}, the authors studied the exposure given to individuals. 
Additionally, \cite{ZehlikeB0HMB17} provides guarantees that a group is present for a certain proportion in the rankings. 
Usually, the amount of exposure or visibility a group should receive is based on its representation in the data~\citep{BiegaGW18,SapiezynskiZRMW19,SinghJ18,YadavDJ19}. 
In the context of personalized rankings, such as recommender systems that generate different suggestions for each user, the impact of sensitives attributes on ranking quality (fairness) is also considered. 
This topic goes beyond the scope of this paper. 
So, we remind the reader of the recent survey by \cite{AbdollahpouriAB20} for references in this area. 
Compared with prior work, our paper focuses on a class of non-personalized item-rankings that rely on users' reputations to generate ranking, namely reputation-based ranking systems. 
The extent to which bias against certain groups of individuals, characterized by a common sensitive attribute, affects the reputations leveraged by this class of systems is still under-explored. 
Indeed, contributions tackling bias against groups associated with  \emph{individual sensitive attributes} have been recently proposed \citep{RamosB20}. 

\paragraph{Robustness in reputation-based ranking systems.} 
Prior reputation-based ranking systems have employed a weighted average as a strategy to combine individual ratings. \cite{yu2006decoding,de2010iterative} proposed relevant examples of works in this direction. \cite{LiYHC12} introduced the concept of {\em reputation}, which measures how close are the preferences of a user to those of the others. However, reputation can also be computed considering other data sources, such as product categories~\citep{li2015topic}, or by considering notions of trust~\citep{AllahbakhshIMB15}. One essential property of this class of systems is represented by their robustness, which was studied from different perspectives. \cite{rezvani2014secure} aims at improving robustness against collusion attacks by providing an approximation of the existing iterative filtering techniques, while~\cite{SU201755,xu2019meurep} seek to provide robustness when considering quality-of-service data. 
The approach proposed by \cite{tibermacine2019reputation} is a HITS-based reputation evaluation process that allows us to detect malicious users based on a majority voting and assess service reputation after the exclusion of malicious users' feedback ratings.

\paragraph{Contextualizing our contribution.} 
To the best of our knowledge, no prior work has ever considered the combination of the impact of multiple users' sensitive attributes to generate a less biased ranking and the possible impact of a bias mitigation strategy on the system's robustness. Given that both perspectives equally impact the end-users and the platform, in the rest of the paper, we investigate techniques for removing bias in the reputation scores computed by a ranking system and how they affect robustness. 

\section{Preliminaries}\label{sec:notation}
We formalize the main concepts underlying our study, including the ranking context and the reputation-based ranking systems. 
We close this section by presenting the datasets considered in our study, which will serve as a means to assess the behavior of state-of-the-art reputation-based ranking systems and, later on in this paper, to compare them with our approach. 

\subsection{Ranking context formalization}

Given a set $\mathcal U=\{u_1,\ldots,u_n\}$ of $n\in\mathbb N$ users and a set $\mathcal I=\{i_1,\ldots,i_m\}$ of $m\in\mathbb N$ items, we assume that a user $u \in \mathcal{U}$ can assign a discrete rating to an item $i \in \mathcal{I}$. 
We assume that the collected feedback is abstracted as a possibly sparse matrix of ratings denoted by $\mathcal R\in\mathbb R^{n\times m}$. 
This matrix's ratings are normalized to be in the range $]0,1]$, dividing by the maximum allowed rating. 
The difference between the maximum and the minimum normalized ratings is denoted by $\Delta_R$. When we consider a user $u\in\mathcal{U}$ and an item $i\in\mathcal{I}$, $R_{ui}=0$ if user $u$ did not rate item $i$; otherwise, $R_{ui}$ is positive.

We consider $\mathcal A =\{A_1,\ldots,A_k\}$ as a set of $k>0$ user attributes (e.g., gender, age) and let each attribute $A_j=\{a_{j_1},\ldots,a_{j_{s_j}}\}$, with $1 \leq j \leq k$, have $s_j$ classes. For instance, an attribute $A_j$ abstracting user's genders can include two or more classes, i.e., $A_j=\{male,female, ...\}$. More precisely, we denote classes of an attribute $A_j \in \mathcal A$ by $a_j, a_j', a_{j,1},\ldots,a_{j,s_j}$ and we assume that $A_j(u)=a_j$ is the class $a_j \in A_j$ for attribute $A_j \in \mathcal A$ a user $u\in\mathcal U$ belongs to. 
Finally, we identify the set of users who rated item $i\in\mathcal I$ by $\mathcal U_i=\{u\in\mathcal U\,:\,R_{ui}>0\}$, the set of items that user $u\in\mathcal U$ rated by $\mathcal I_u=\{i\in\mathcal I\,:\,R_{ui}>0\}$, and the set of users belonging to the class $a_j \in A_j$ of attribute $A_j \in \mathcal A$ by $\mathcal U(a_j)=\{u\in\mathcal U\,:\,A_j(u)=a_j\}$. 
In the context of our work, if an attribute $A_j \in \mathcal A$ has classes $A_j=\{a_{j_1},\ldots,a_{j_{s_j}}\}$, we assume that $\mathcal U(a_j)\cap\mathcal U(a_j')=\emptyset$ for all $a_j, a_j' \in A_j$, with $a_j \neq a_j'$. Furthermore, throughout this paper, given a vector $v\in\mathbb R^n$, we denote its \emph{average} by $avg(v)=\frac{1}{n}\sum_{i=1}^n v_i$ and its \emph{standard deviation} by $std(v)=\sqrt{\frac{1}{n}\sum_{i=1}^n \left(v_i-avg(v)\right)^2}$. 
Lastly, given two vectors $u,v\in\mathbb R^n$, we use the \textit{root mean squared error} (RMSE) function to evaluate how different the two vectors are: $\text{RMSE}(u,v)=\sqrt{\frac{1}{n}\sum_{i=1}^n \left(u_i-v_i\right)^2}.$ 

\subsection{Reputation-based ranking formalization}\label{sec:rep}
Our study focuses on a specific class of ranking systems whose underlying algorithm assigns a relevance score to a user, based on a notion of \emph{reputation}. Specifically, these systems aim to rank items by weighing user preferences with each user's reputation. The resulting non-personalized rankings are essential for users not logged in (e.g., course rankings in e-learning platforms, such as Udemy) or to defend the system against attacks. 
In this context, a ranking of an item $r_i$ denotes a relevance score of item $i$, based on the ratings that users assigned to the item. This is a non-personalized ranking, although it induces an order relation between items. In the scenario of recommender systems, the ranking corresponds to a set of ordered items for each user, which is then used to present a personalized recommendation list of items to that user.  

In this area, \cite{LiYHC12} proposed a reputation-based system implementing an iterative method with exponential rate convergence. The authors showed that their method is more robust to attacks than a simple arithmetic average (AA). Subsequently, \cite{saude2017robust} extended the original scheme to adjust some of its unintuitive properties: 
    \begin{itemize}
    	\item if all that rated an item $i\in I$ gave the same rating, $R_{ui}=R$, then the ranking of $i$ is almost never $R$, $r_i\neq R$, unless all those users have the same reputation;
    	\item if all that rated an item $i\in I$ gave the minimum allowed rating, $R_\bot$, then the ranking of $i$ is almost always smaller than $R_\bot$, i.e., $r_i< R_\bot$ unless all those users have the same reputation.     
    \end{itemize}
To overcome those properties, in~\cite{saude2017robust}, at each iteration, their scheme updates the ranking of each item $i$, $r_i^{k+1}$, as a weighted average of given ratings to $i$ with the reputations, $c_u^{k}$, of the users that rated the item; then, the system updates the users' reputation by computing how much the user's ratings disagree to the updated items' ranking. More precisely, their strategy can be formalized as follows\footnote{In this work, we do not consider the diversity of decay functions presented in~\cite{saude2017robust} because, in Section~8 of~\cite{saude2017robust}, the authors state that the effectiveness and robustness gains of using the different decay functions are statistically zero.}.
\begin{equation}\label{eq:gramos}
    \begin{cases}
    r_i^{k+1} = \displaystyle\sum_{u\in \mathcal U}R_{ui} c_u^k\bigg/\displaystyle\sum_{u\in \mathcal U}c_u^k\\
    c_u^{k+1} = 1 - \displaystyle\frac{\lambda}{|\mathcal I_u|}\sum_{i\in \mathcal I_u}|R_{ui}-r_i^{k+1}| 
    \end{cases}
\end{equation}
for any initial $c_u^0\in]0,1]$ (we select $c_u^0=1$) and for $\lambda\in]0,1[$, a hyper-parameter that penalizes the discordance of a user given ratings with the items' rankings. The system implementing the strategy in Eq.~\eqref{eq:gramos} not only converges with exponential rate but also is more robust to attacks than the one in~\cite{LiYHC12}. To support the reader in grasping this core concept, we present an illustrative example of the previous definitions. This example will be used to explain step-by-step the problem we address throughout this paper. 

\hlbox{Example 1 (Part 1/4)}{We consider a toy synthetic dataset with $|\mathcal U|=6$,  $|\mathcal I|=5$, $\mathcal A=\{Gender,Age\}$, with $Gender=\{A,B\}$ and $Age=\left\{]0,40],]40,\infty\right[\}$. Table~\ref{tab:Example} details the dataset, and the users' reputations and items' ranking computed with~\eqref{eq:gramos}. Users with preferences that are more different from that of the community (e.g., $u_5$) receive a lower reputation, which also affects the ranking of the items they like (e.g., $i_4$). Instead, users with the same absolute difference average to the estimated rankings (e.g., $u_2, u_3$, and $u_4$) have the same reputation, and their preferences are weighted equally in the rankings.~\hfill $\diamond$}

\begin{table}%[!h]
\centering
\begin{tabular}{|c|ccccc|c|l|c|} \hline
& $i_1$ &  $i_2$ &  $i_3$ &  $i_4$ & $i_5$ & Gender & Age & $c_{u_j}$\\ \hline
$u_1$ & 1 & 0.8 & 1 & 0.4 & 0.6 & A & $]0,40]$ & 0.9255\\
$u_2$ & 1 & 1 & 1 & 0.6 & 0.6 & A & $]0,40]$ & 0.9460\\
$u_3$ & 1 & 1 & 1 & 0.6 & 0.6 & A & $]40,\infty[$ & 0.9460\\
$u_4$ & 0.8 & 1 & 1 & 0.4 & 0.6& A & $]40,\infty[$ & 0.9446\\
$u_5$ & 0.4 & 0.8 & 0.6 & 1 & 0.2 & B & $]0,40]$ & 0.8540\\
$u_6$ & 0.6 & 0.8 & 0.6 & 0.8 & 0.4 & B & $]0,40]$ & 0.9140\\
\hline
$r_{i_j}$ & 0.8071 & 0.9026 & 0.8721 & 0.6272 & 0.5052 & & & \\ \hline
\end{tabular}

\caption{Example of a synthetic dataset with users' reputations and items' rankings computed using Eq.~\eqref{eq:gramos}, with $\lambda =0.5$ ($\lambda$ should verify $\lambda\in]0,1[$) to provide a medium penalization to the users discordant with the rest of the community, and repeating the process for 8 iterations to ensure convergence.}
\label{tab:Example}
\end{table}

\paragraph{Problem formalization}
Given a set of users $\mathcal U$, a set of items $\mathcal I$, a set of ratings given by users to items $\mathcal R$, and a set of user's attributes $\mathcal A$ such that $A_j=\{a_{j_1},\ldots,a_{j_{s_j}}\} \in \mathcal A$, our goal is to: 
\begin{enumerate}
\item compute users’ reputation $\{c_u\}_{u \in \mathcal U}$ on user preferences, capturing how relevant are the preferences of an individual user for the community as a whole, in a ranking system;
\item compute rankings of items $\{r_i\}_{i \in \mathcal I}$ as a weighted average of the users’ reputations and the items’ ratings;
\item obtain reputations’ distributions that, considering every pair of $k$-tuple of classes $(a_1,\ldots,a_j)\in A_1\times\ldots\times A_k$, each associated to the set of users $\mathcal U(l=(a_1,\ldots,a_k))$, are statistically indistinguishable (\emph{multi-attribute reputation independence}).
\end{enumerate}

\subsection{Datasets}\label{sec:dataset}
Counteracting a disparate reputation on multiple attributes is not a trivial task due to the lack of public datasets with ratings and multiple users' sensitive attributes. This restriction led us to investigate this phenomenon in two real-world datasets containing both users' ratings and sensitive attributes. 

The first dataset,  \underline{Movielens-1M} (ML-1M) \citep{harper2015movielens}, contains 1,000,209 anonymous ratings of $|\mathcal I|=3,952$ movies made by $|\mathcal U| = 6,040$ users who joined MovieLens. All ratings are provided on a 5-star scale (whole-star ratings only), and each user has at least 20 ratings. User information is provided voluntarily by users, but only gender, age and job information is included in this dataset, i.e., $\mathcal A = \{gender, age,job\}$\footnote{While the dataset also offers the zip code of the users, they all are from the USA. The zip-code represents a too fine granularity, leading to too smaller groups which would not allow us to obtain statistically valid results. Hence, given that it was not possible to characterize disparate reputation based on the geographic provenience of the users and due to space constraints, the geographic perspective in Movielens is not presented.}. Specifically, the gender is denoted by a binary attribute\footnote{While gender is by no means a binary construct, to the best of our knowledge, no dataset with non-binary genders exists. What we are considering is a binary feature, as the current publicly available datasets offer.}, yielding to $\{m, f\}$. The age is specified among a set of seven ranges, originally provided together with the dataset, leading to $\{< 18, 18 - 24, 25 - 34, 35 - 44, 45 - 49, 50 - 55, > 55\}$. 
The job is specified among the following: ``other''; ``academic/educator''; ``artist''; ``clerical/admin''; ``college/grad student''; ``customer service''; ``doctor/health care''; ``executive/managerial''; ``farmer''; ``homemaker''; ``K-12 student''; ``lawyer''; ``programmer''; ``retired''; ``sales/marketing''; ``scientist''; ``self-employed''; ``technician/engineer''; ``tradesman/craftsman''; ``unemployed''; and ``writer''. 
We depict the users' distributions by attributes gender and age in Figure~\ref{fig:pie1}. From the top chart, it can be observed that male users represent the majority group, covering over 70\% of the user base. When considering the age perspective (bottom chart), the majority group is represented by the range 25-34, covering almost 35\% of the user base. The age attribute shows a strong imbalance regarding users' representation, with the age ranges covering the extremes (representing teenagers and elder people) being the less represented. 
Regarding the job attribute, the mean number of users per class is $287.619$, and the standard deviation is $223.48$. The most represented class is ``college/grad student'' ($\approx 12.57\%$) and the least represented one is ``farmer'' ($\approx 0.28\%$).

The second dataset that we used is the \underline{BookCrossing} dataset~\citep{ziegler2005improving}. It consists of 
53,408 users, 263,956 items, and  745,161 ratings. This dataset has the attributes age and location, which are provided for each user, $\mathcal A=\{age, location\}$. 
Given the $location$ of each user, originally represented as a tuple containing $(city, region, country)$ in the dataset, we created the demographic groups based on their continent of provenience. This assumption would allow us to obtain groups large enough to assess statistically valid results. To do so, we filtered the original dataset by mapping countries and their respective continent through a country-continent table\footnote{\url{https://pkgstore.datahub.io/JohnSnowLabs/country-and-continent-codes-list/country-and-continent-codes-list-csv_csv/data/b7876b7f496677669644f3d1069d3121/country-and-continent-codes-list-csv_csv.csv}}. However, this mapping is not always possible because the location data provided by users is incomplete or has spelling errors not easy to be fixed even by human curators. This process led to 22,625 users with valid continent locations. Specifically, the following continent locations were identified: 
\textbf{AF} - Africa, 
\textbf{AS} - Asia, 
\textbf{NA} - North America, 
\textbf{SA} - South America, 
\textbf{OC} - Oceania, and
\textbf{EU} - Europe. 
Again, to ensure group representations that lead to statistically valid results, we merge these locations in $\{\text{EU}, \text{AS+OC}, \text{NA+SA}, \text{AF}\}$. Finally, we decide to group the age values based on a set of four ranges, leading to $\{< 20, 20 - 40, 40 - 60, > 60\}$. These ranges ensure that demographics groups obtained by intersecting each age range with each continent location have a large enough amount of users. 

We portray the distribution of users for each of the attributes in Figure~\ref{fig:pie2}. The top chart shows that the distribution of the age groups reflects that of the MovieLens dataset; since we have fewer classes, here the majority group ($]20,40]$) covers more than half of the user base, and the young and elder users describe the tail of the distribution, being the less represented. The representation of the group, based on the geographic provenience (bottom chart), shows that the majority of users is from Europe, representing 57.61\% of the user base. The rest of the representation is mainly split between North and South America (23.13\%) and Asia and Oceania (18.58\%). As the dataset descriptions highlight, we have different sensitive attributes of the users in the two datasets, each working at a different granularity (both in the way groups are split and in the representation of the different groups of the dataset). Besides being driven by the need to guarantee statistically valid results, this situation will also pose us in a better position to assess our problem and the effectiveness of our approach in various real-world settings.
\begin{figure}[!t] 
\centering      
\subfigure[]{\includegraphics[width=0.4\textwidth]{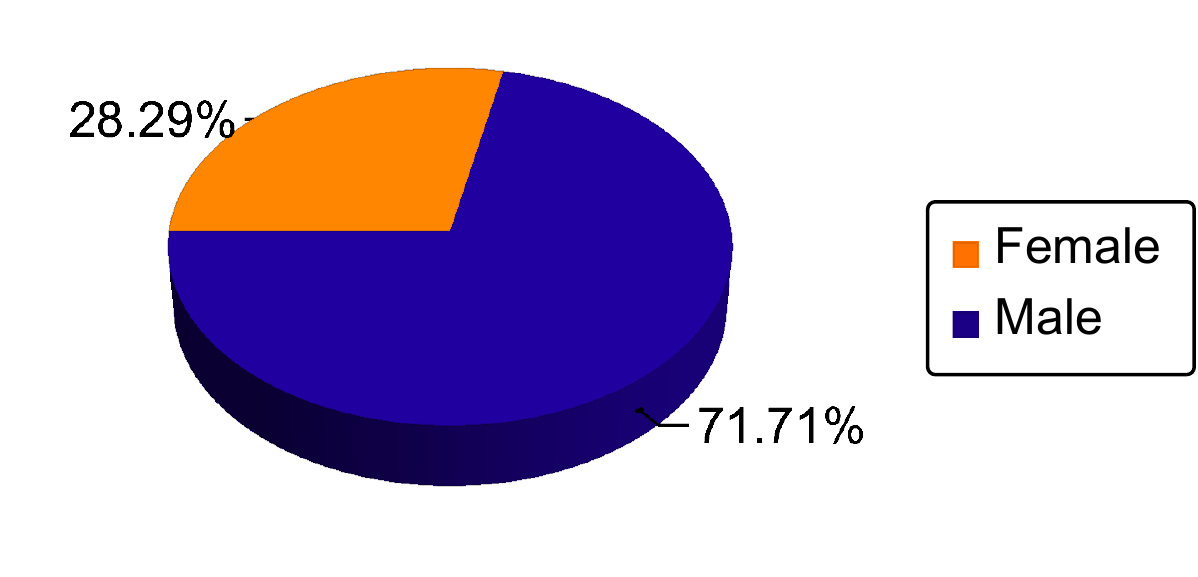}}
\qquad
\subfigure[]{\includegraphics[width=0.4\textwidth]{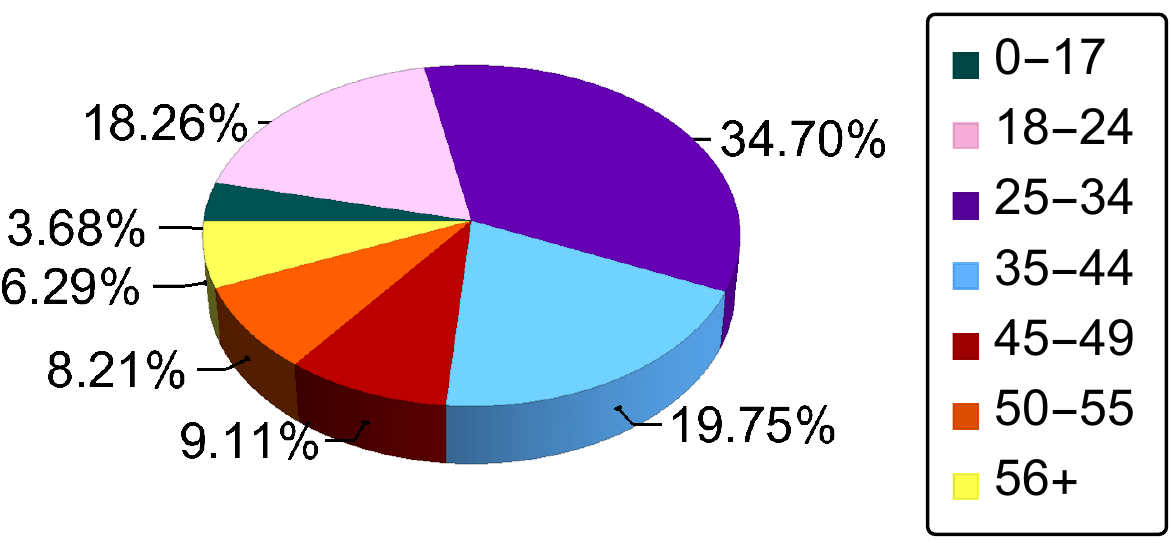}}
\caption{The distribution of the ML-1M dataset users based on gender~(a) and age~(b).}
\label{fig:pie1} 
\end{figure}

\begin{figure}[!t] 
\centering  
\subfigure[]{\includegraphics[width=0.4\textwidth]{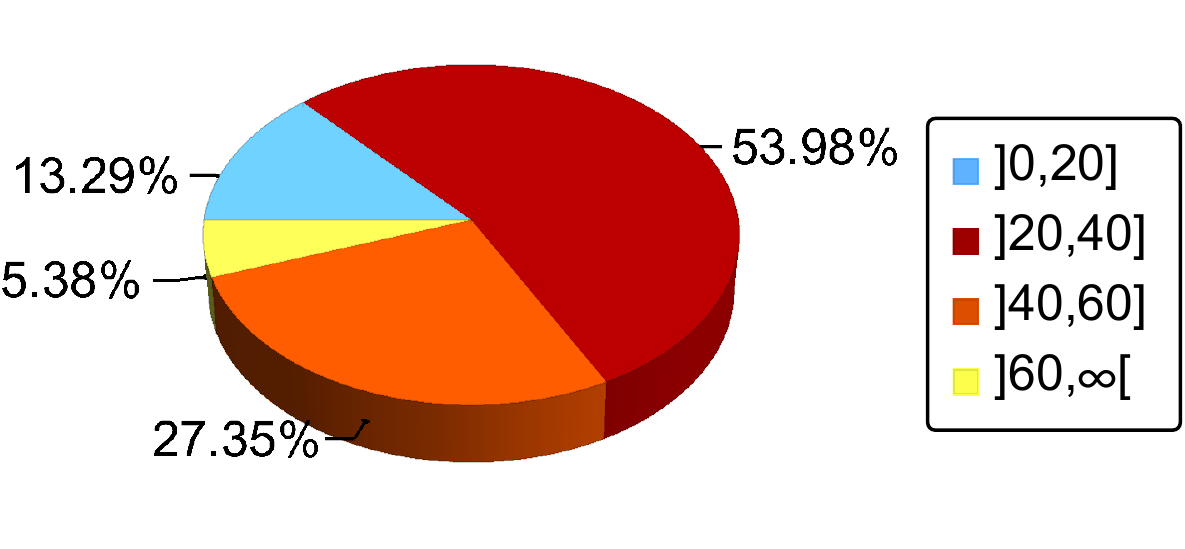}}
\qquad
\subfigure[]{\includegraphics[width=0.4\textwidth]{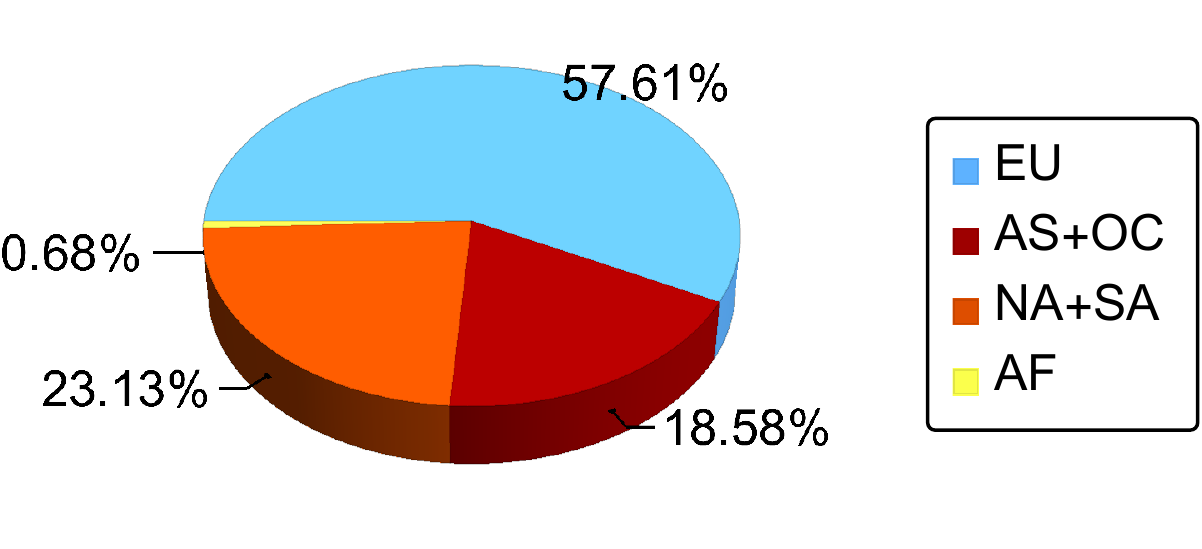}}

\caption{The distribution of the BookCrossing dataset users based on age~(a) and location~(b).}
\label{fig:pie2} 
\end{figure}

It is worth noticing that the users' attributes that we are considering are limited to what is available in reported benchmark datasets. As future work, we would like to evaluate our results in novel datasets containing more users' attributes, or even collect datasets that allow to overcome this limitation.

\section{Impact of Single-Attribute Reputation Independence on Different Protected Groups}\label{sec:single}
This section presents recent advances in mitigating reputation disparities among demographic groups, aiming to guarantee the independence of reputation scores from users' sensitive attributes. 

\subsection{Single-Attribute Mitigation Methodology}
\cite{RamosB20} introduced the \emph{Disparate Reputation} (DR) concept in reputation-based ranking systems, as the difference between the average reputation of the users belonging to two distinct classes. 
Given different classes $a$ and $b$ for the same attribute, the disparate reputation $\Delta(a,b)$ is:
 $$ \Delta(a,b)=\mathcal \mu_{a}-\mu_{b},$$
 where $\mu_a=avg(\{c_u\,:\,u\in\mathcal U(a)\})$ and $\mu_b=avg(\{c_u\,:\,u\in\mathcal U(b)\})$.  
The metric ranges in $[-1+\Delta_R\lambda,1-\Delta_R\lambda]$,  recalling that $\Delta_R$ is the difference between the maximum and the minimum normalized ratings. Its value is $0$ when both averages of the reputations are the same ($\mu_a=\mu_b$). Negative values point that class $b$ has users with higher reputation values and, vice-versa, for the class $a$ and positive values. 

To characterize if disparate reputation systematically affects the users belonging to a class, the authors proposed to perform a Mann-Whitney (MW) statistical test. The MW statistical test was performed between each pair of user groups' reputation distributions relative to an attribute. We may use this strategy to check whether two independent samples come from populations with the same distribution. 
This test is performed on each pair of groups and compares the median of the two samples. In this work, we use instead the \textit{two-sample location test (LT)} to compare the means of two samples, to enable a more direct comparison between DR and LT because both use the means. 

To mitigate disparate reputation when considering a single users' sensitive attribute (that does not need to be binary), the authors performed a final post-processing step. Specifically, after the iterative method in Eq.~\eqref{eq:gramos} (i.e., after $N$ iterations), this additional step is defined as follows.

\begin{equation}\label{eq:fair}
    \begin{cases}
    c_u = \mu +\displaystyle \left(c_u^N-\mu_l\right)\frac{\sigma}{\sigma_l}, & \text{for }l=1,\ldots,k\textit{ and }u\in \mathcal U(a_l)\\
    r_i = \displaystyle\sum_{u\in \mathcal U}R_{ui} c_u\bigg/\displaystyle\sum_{u\in \mathcal U}c_u &
    \end{cases},
\end{equation}
where $\mu=avg(\{c_u\,:\,u\in\mathcal U\})$, $\sigma=std(\{c_u\,:\,u\in\mathcal U\})$, $\mu_l=avg(\{c_u\,:\,u\in\mathcal U(a_l)\})$, and $\sigma_l=std(\{c_u\,:\,u\in\mathcal U(a_l)\})$. 
Furthermore, $c_u$ denotes the final reputation of user $u$ and $r_i$ the final ranking of item $i$. Concretely, our adjusted reputation scores, $c_u$, enforces the reputations obtained from Eq.~\ref{eq:gramos} with the same distribution for each group of users under the sensitive attributes.

The authors showed that the reputations' distributions for each class of a sensitive attribute become statistically indistinguishable, leading to single-attribute reputation independence after this additional step. 
Though this solution can mitigate a reputation bias for user groups based on a given sensitive attribute, it remains unclear whether an analogous bias on other attributes ends up being mitigated by considering a unique attribute. This problem motivated us to perform a more extensive evaluation of the described methodology in this paper. 
Hence, we explore the methodology detailed above in Example~1.  

\hlbox{Example 1 (Part 2/4)}{Recalling Part 1, if we consider the attribute $Gender$ for instance, we observe that the average reputations for users in each class is $\mu_{A}\approx 0.9405$ and $\mu_B \approx 0.8840$. Hence, the DR is $\Delta(A,B)=\mu_{A}-\mu_{B}=0.0565$. 
Therefore, on average, the opinion of users with $Gender$ $ A $ is more influential for the system while computing the items' rankings. 
Given that this is a toy example, the LT statistical test may not be used. Using the additional step of Eq.~\eqref{eq:fair}, we obtain the users' reputations and items' rankings in Table~\ref{tab:Example_2}. Specifically, we obtain $\tilde{\mu}_A=\tilde{\mu}_B\approx 0.8840$, yielding a DR of $\widetilde{\Delta(A,B)}\approx 0$. 
Now, on average, the opinion of users with $Gender$ $A$ is equally reflected as that of the users with $Gender$ $B$.~\hfill $\diamond$}

\begin{table}%[!ht]
\centering
\begin{tabular}{cccccc}%c}
  $c_{u_1}$ & $c_{u_2}$ & $c_{u_3}$ & $c_{u_4}$ & $c_{u_5}$ & $c_{u_6}$%& Occupation
\\ \hline
0.8690 & 0.8895 & 0.8895 & 0.8881 & 0.8769 & 0.8911 \\
\end{tabular}
\begin{tabular}{ccccc}%c}
  $r_{i_1}$ & $r_{i_2}$ & $r_{i_3}$ & $r_{i_4}$ & $r_{i_5}$ %& Occupation
\\ \hline
0.8001& 0.9006& 0.8667& 0.6335& 0.5003 \\
\end{tabular}

\caption{Users' reputations and items' ranking of Example~1 after performing the additional step in Eq.~\eqref{eq:fair} for the gender sensitive attribute. Similar patterns show up for age groups.}
\label{tab:Example_2}
\end{table}

\subsection{Exploratory Analysis on Reputation under Multiple Sensitive Attributes}

We start by doing an exploratory analysis of the reputation-based system formalized in Section~\ref{sec:rep} of Example~1. 
In this analysis, the goal is to understand if, when mitigating bias for a sensitive attribute, there is still a bias related to another sensitive attribute. 

\hlbox{Example 1 (Part 3/4)}{After mitigating bias for attribute $Gender$ in Part 2, we now compute the DR for the attribute $Age$. 
We have that $\Delta\left(]0,40],]40,\infty[\right)=\mu_{]0,40]}-\mu_{]40,\infty[}\approx 0.8816-0.8888=-0.0072$. 
So, on average, the opinion of users with $Age$ $]40,\infty[$ is more reflected in items' rankings, and there is still a bias related to attribute $Age$.~\hfill $\diamond$}

To emphasize the existence of this issue in a real-world context, we then conduct an exploratory analysis on the reputation-based system formalized in Section~\ref{sec:rep}, under the ML-1M dataset~\citep{harper2015movielens}. 
The latter dataset includes gender (binary for this dataset, but not binary in general) and age as user's sensitive attributes. More details can be found in Section \ref{sec:dataset}.  

First, we apply the mitigation strategy described in Eq.~\eqref{eq:fair}, grouping users based on their gender. 
Illustrated in Figure~\ref{fig:genderBWC}, the results on disparate reputation confirm that introducing the additional step formalized in Eq. \ref{eq:fair} in the reputation-based ranking methodology leads to users' reputations independence for gender-based groups. 
Notwithstanding, under the same scenario, if we group reputation scores based on another sensitive attribute -- the age -- and measure the disparate impact on the resulting reputation distributions, then there is a disparate reputation for the attribute age, as revealed in Figure~\ref{fig:gender_ageBWC}. 
More precisely, by applying Eq.~\eqref{eq:gramos} and Eq.~\eqref{eq:fair} sequentially, it possible to mitigate a reputation bias on the attribute gender. However, as depicted in Figure~\ref{fig:gender_ageBWC}, the reputation bias on age groups is not mitigated in a collateral fashion. 
The sub-figures of Figure~\ref{fig:gender_ageBWC} indicate the reputations on age-based groups before and after mitigating for attribute gender, showing identical values. 
This example confirms that mitigation on genders does not mitigate on age ranges collaterally, since the reputation scores are still biased for age-based groups.

Similar experiments were conducted with Eq.~\eqref{eq:fair} to mitigate reputation bias on age and test gender reputation bias. None of the alternatives showed that mitigating a reputation bias on one attribute also works on other attributes simultaneously. Hence, we can draw the following proposition:
 
\begin{figure}[!t] 
\centering      
\subfigure[reputation scores]{\includegraphics[width=0.40\textwidth]{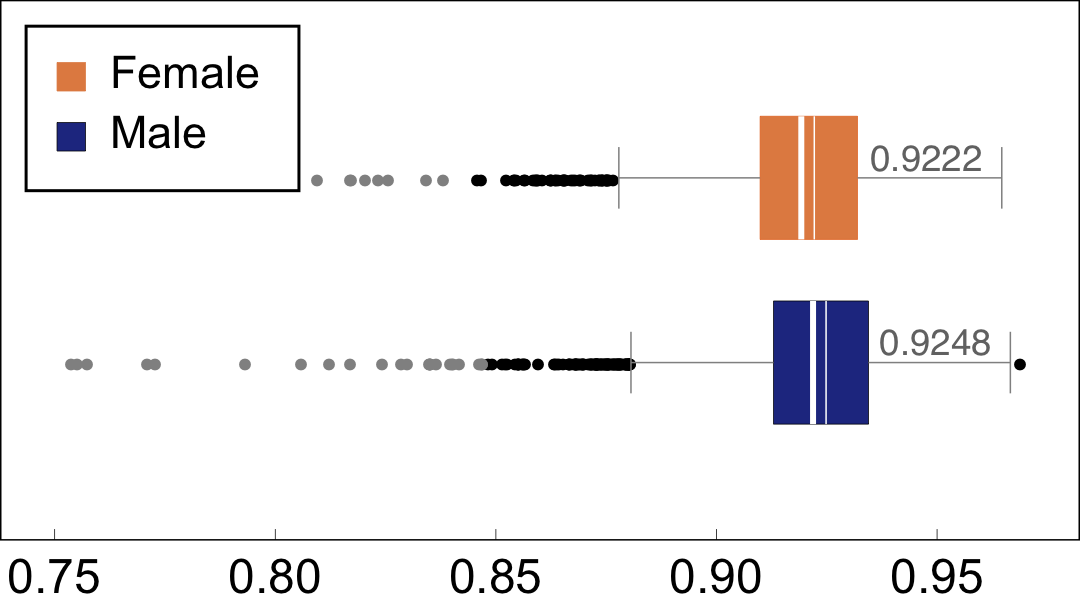}}
\qquad
\subfigure[reputation scores]{\includegraphics[width=0.40\textwidth]{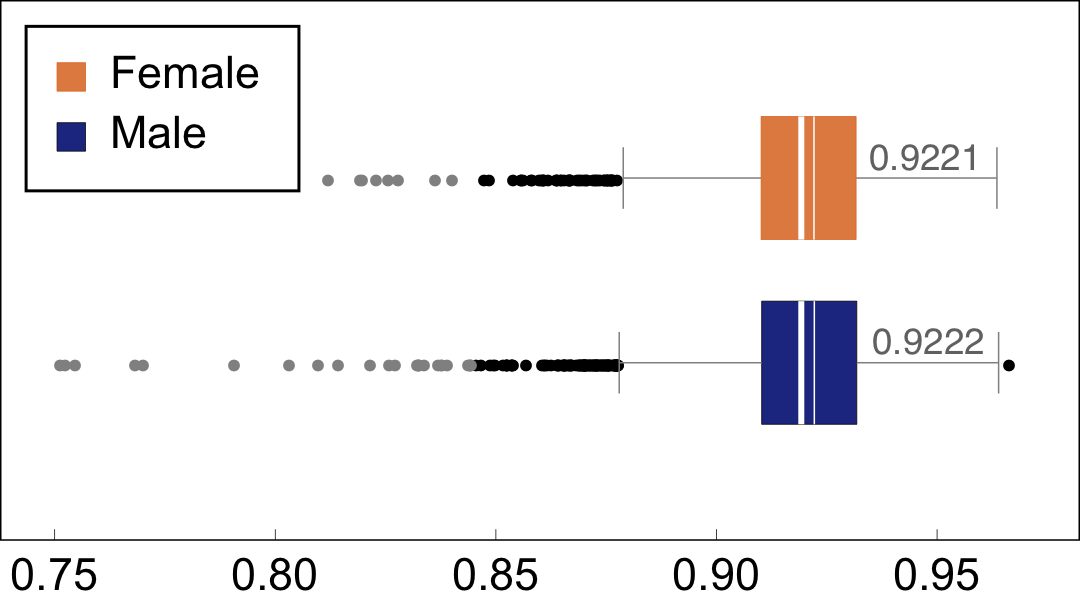}}

\caption{Box-whisker-chart for users' reputations resulting from Eq.~\eqref{eq:gramos} (a), and from Eq.~\eqref{eq:gramos} and Eq.~\eqref{eq:fair} (b), with $\lambda = 0.5$, for user groups based on gender (ML-1M).}
\label{fig:genderBWC} 
\end{figure}

\begin{figure}[!t] 
\centering      
\subfigure[reputation scores]{\includegraphics[width=0.40\textwidth]{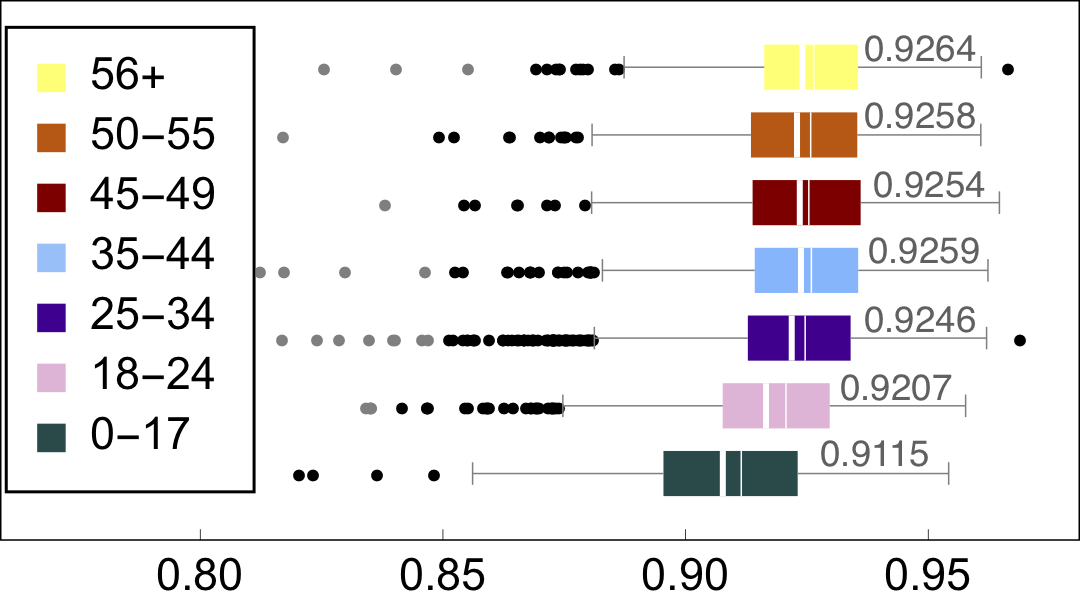}}
\qquad
\subfigure[reputation scores]{\includegraphics[width=0.40\textwidth]{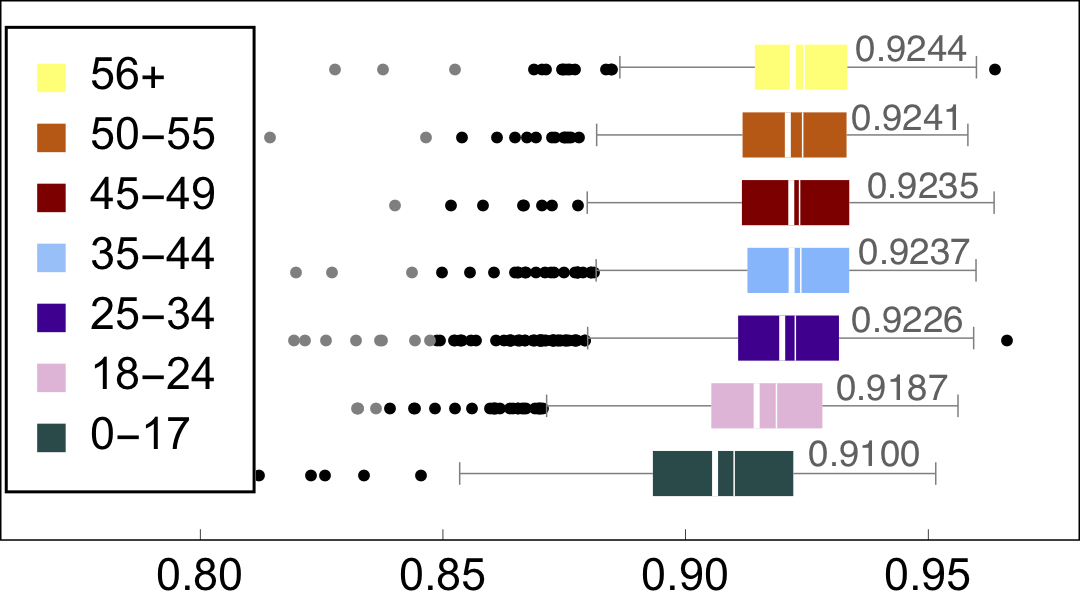}} 
\caption{Box-whisker-chart for reputations of users resulting from Eq.~\eqref{eq:gramos} (a), and from Eq.~\eqref{eq:gramos} and Eq.~\eqref{eq:fair} \underline{applied to  attribute gender and evaluated on attribute age} (b), with $\lambda = 0.5$, for user groups based on age (ML-1M).} 
\label{fig:gender_ageBWC} 
\end{figure}

\begin{prop}\label{prop:1}
Given a set of users $\mathcal U$, a set of items $\mathcal I$, a set of ratings that users gave to items $\mathcal R$ and a set of user attributes $\mathcal A = \{A_1,\ldots,A_k\}$, mitigating a reputation bias with Eq.~\eqref{eq:fair} for each attribute individually (for any order) does not necessarily yield reputations without bias for both attributes.~\hfill$\circ$ 
\end{prop}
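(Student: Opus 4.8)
The plan is to prove a negative ("does not necessarily") statement, for which a single explicit counterexample is logically sufficient; the toy dataset of Example~1 is constructed precisely to serve this role, so I would build the argument around it. First I would record the structural fact that Eq.~\eqref{eq:fair} acts, on the class $\mathcal U(a_l)$ of the chosen attribute, as the affine (z-score) map $c_u \mapsto \mu + (c_u^N-\mu_l)\,\sigma/\sigma_l$. This map rescales each class so that its post-transformation mean equals the global mean $\mu$ and its standard deviation equals the global $\sigma$; consequently every class of the attribute being mitigated ends with the same mean, forcing $\Delta(a,b)=0$ for every pair of classes of that attribute. This establishes that single-attribute mitigation does achieve reputation independence for the attribute it targets, which is the ``positive half'' I need in place before exhibiting failure on a second attribute.

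Next I would instantiate the counterexample. Mitigating Example~1 on $Gender$ via Eq.~\eqref{eq:fair} yields $\tilde\mu_A=\tilde\mu_B$ and hence $\widetilde{\Delta(A,B)}\approx 0$ (Part~2). I would then compute the disparate reputation for the cross-cutting attribute $Age$ on exactly these gender-standardized scores: the two age classes partition the users differently from the gender classes (e.g.\ $u_3,u_4$ lie in gender~$A$ but in the older age range, while the gender~$B$ users all lie in the younger range), so the per-gender affine maps do not align the age-class means. Evaluating gives $\Delta(\,]0,40],\,]40,\infty[\,)\approx -0.0072\neq 0$ (Part~3). A single instance with nonzero residual suffices to refute the universal claim that mitigating one attribute removes bias on both, which proves the proposition.

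Finally I would address the parenthetical ``for any order.'' By the symmetry of Eq.~\eqref{eq:fair} in the two attributes, mitigating $Age$ first and then measuring $Gender$ produces the analogous nonzero residual, so neither single-attribute choice works. For genuinely sequential application, I would argue that the two standardizations do not commute: applying the $A_2$-map ignores the $A_1$-partition and, because the $A_2$-classes cut across the $A_1$-classes, it scales members of a single $A_1$-class unequally, generically perturbing the $A_1$ class means away from equality and re-introducing bias on $A_1$; hence no order zeros both simultaneously. I expect the main obstacle to be not the logic (one counterexample settles a ``not necessarily'' statement) but the construction: the joint distribution of the two attributes must be sufficiently entangled that mitigating one does not coincidentally zero the other's DR. Example~1 is engineered so that the age partition cuts nontrivially across the gender partition, which guarantees a nonzero residual and avoids a degenerate cancellation.
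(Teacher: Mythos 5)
Your overall strategy --- a single explicit counterexample, preceded by the observation that Eq.~\eqref{eq:fair} does equalize the class means of the attribute it targets --- is the same as the paper's, which also proves the proposition by counterexample (on MovieLens-1M rather than on the toy Example~1). However, there is a mismatch between the claim you actually witness and the claim stated. The proposition, as the parenthetical ``(for any order)'' and the paper's own proof make clear, concerns the \emph{sequential} application of Eq.~\eqref{eq:fair} to each attribute in turn: the paper applies the gender mitigation, then the age mitigation, and exhibits that the \emph{final} gender means differ ($\mu_{female}=0.906088$ vs.\ $\mu_{male}=0.906067$), i.e., the second standardization destroys the independence achieved by the first. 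The counterexample you fully compute (mitigate gender only, then observe $\Delta(\,]0,40],\,]40,\infty[\,)\approx -0.0072$ on age) witnesses a different and weaker fact --- that mitigating one attribute does not \emph{collaterally} fix the other --- which is the content of the paper's exploratory analysis (Example~1, Parts~2--3, and Figures~3--4), not of Proposition~\ref{prop:1}. Under the sequential reading, ``mitigating for each attribute individually'' has not been performed in your main counterexample, so it does not refute the stated claim.

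You do address the sequential case in your final paragraph, and your diagnosis of the mechanism (the two per-class standardizations do not commute, because the second attribute's classes cut across the first's and rescale members of a single first-attribute class unequally) is exactly right. But for a ``does not necessarily'' statement the argument must terminate in a concrete witness, and ``generically perturbing the class means away from equality'' is not one: genericity arguments can fail on a specific finite dataset (indeed, in Example~1 the gender-$B$ users all fall in a single age class, so one would have to actually carry out the second standardization and check that the gender means separate). To close the gap, compute the composition explicitly --- apply Eq.~\eqref{eq:fair} for gender and then for age on Example~1 (or on ML-1M, as the paper does) and report the resulting nonzero gender disparity. With that computation in place, your symmetry remark disposes of the other order and the proof is complete; without it, the proposition is asserted rather than proved.
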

\begin{proof}
The proof is based on providing a counter-example, considering the MovieLens-1M dataset. 
Consider $k=2$, i.e., only two sensitive attributes, i.e., gender and age. Consider applying Eq.~\eqref{eq:fair} to mitigate a reputation bias for the attribute gender first, and after for the attribute age. The final reputations' averages for the gender are statistically different, $\mu_{female}=0.906088$ and $\mu_{male}=0.906067$, hence $\mu_{female}\not\approx \mu_{male}$.
\end{proof}

Given such a finding uncovered by this paper, it becomes of utmost importance to investigate whether it is possible to devise a method that reduces a reputation bias for multiple attributes jointly. 

\section{Multi-attribute Reputation Independence}\label{sec:mit_mul}
In this section, we aim to avoid the ranking system being systematically impacted by a bias against groups under a sensitive attribute while mitigating a bias against groups for another attribute. To this end, we design a strategy that, given a set of users' sensitive attributes mitigates the user reputations' bias against user groups characterized by different combinations of those attributes. This strategy allows us to make the reputation computation independent from all the considered sensitive attributes, at the same time. We properly designed each component of our method to ensure both the feasibility and efficiency of reputation-based ranking systems. We can easily extend our approach to embrace more than two sensitive attributes, at the same time.  

The method proposed in this paper is based on partitioning users according to more than one attribute, jointly. Specifically, let $\mathcal A=\{A_1,\ldots,A_k\}$ be a set of $k>0$ attributes and let each attribute $A_j=\{a_{j,1},\ldots,a_{j,s_j}\}$ have $s_j$ classes. 
Now, we consider all the $k$-tuples of classes $(a_1,\ldots,a_k)\in A_1\times\ldots\times A_k$.  
Subsequently, to each $k$-tuple, we associate the set of users $\mathcal U(l=(a_1,\ldots,a_k))$, which is the set of users such that $a_j\in A_j$ for $j=1,\ldots,k$. 
For instance, in Example~1 (Table~\ref{tab:Example}), we have that $\mathcal U(l=(A,]40,\infty[))=\{u_3,u_4\}$. 
It should be noted that the sets of users for all the possible $k$-tuples form a partition of $\mathcal U$, as desired. 
Denoting by $c_u^N$ the outcome reputation of user $u$ after running Eq.~\eqref{eq:gramos} for $N$ iterations, we arrange Eq.~\eqref{eq:fair} as follows.
\begin{equation}\label{eq:multi_fair}
    \begin{cases}
    c_u = \mu +\displaystyle \left(c_u^N-\mu_l\right)\frac{\sigma}{\sigma_l}, & \text{for }u\in \mathcal U(l=(a_1,\ldots,a_k))\text{ and}\\
    & (a_1,\ldots,a_k)\in A_1\times\ldots\times A_k\\
    r_i = \displaystyle\sum_{u\in \mathcal U}R_{ui} c_u\bigg/\displaystyle\sum_{u\in \mathcal U}c_u &
    \end{cases}
\end{equation}

\noindent where, for $l\in A_1\times\ldots\times A_k$, $\mu = \displaystyle\mathop{\min}_{l}\mu_l$ and $\sigma = \displaystyle\mathop{\min}_{l}\sigma_l$, with $\mu=avg(\{c_u\,:\,u\in\mathcal U\})$, $\sigma=std(\{c_u\,:\,u\in\mathcal U\})$, $\mu_l=avg(\{c_u\,:\,u\in\mathcal U(l)\})$, and $\sigma_l=std(\{c_u\,:\,u\in\mathcal U(l)\})$. Furthermore, $c_u$ denotes the final reputation of user $u$ and $r_i$ the final ranking of item $i$. 
It should be also noted that, in Eq.~\eqref{eq:multi_fair}, the option of choosing the minimum of the averages and the minimum of the standard deviations ensures that the reputations' re-scaling lies in the interval $]0,1]$. Similarly to Eq.~\eqref{eq:fair}, even Eq.~\eqref{eq:multi_fair} reconciles reputation's distributions for each $k$-tuple of attributes classes so that the reputations of each $k$-tuple of classes are ``statistically indistinguishable''. This finally leads to the targeted \emph{multi-attribute reputation independence}.

\hlbox{Observation 2}{The post-processing step proposed in this paper in Eq.~\eqref{eq:multi_fair} can be included in any ranking system that calculates rankings by means of a weighted average of the ratings, assuming that the weights represent users' reputation scores.~\hfill $\diamond$}

Notice that the proposed method ensures that each demographic  group of users sees its average opinion reflected proportionally to the size of the group. This would not be the case if there exists disparate reputation between the demographic groups.
In order to help the reader in grasping the idea conveyed by means of Eq.~\eqref{eq:multi_fair}, we apply it to the setting described in Example~1. 

\hlbox{Example 1 (Part 4/4)}{The proposed Eq.~\eqref{eq:multi_fair} is used to mitigate a bias affecting groups characterized by multiple sensitive attributes. First, we build the sensitive meta-attributes of Example~1: 
$\mathcal A=\left\{\left(A,]0,40]\right),\left(A,]40,\infty[\right),\left(B,]0,40]\right)\right\}$. 
Then, by applying Eq.~\eqref{eq:multi_fair} to these meta-attribute groups, we obtain the following average reputations: $\mu_A=\mu_B=\mu_{]0,40]}=\mu_{]40,\infty[}=0.8840$. Therefore the DR reaches zero for every pair of sensitive attributes.~\hfill $\diamond$}

Following the intuition apprehended from Example~1, we then demonstrate the soundness of the proposed method. 

\begin{theorem}\label{th:main}
Consider a matrix of ratings $\mathcal R$, with set of items $\mathcal I$, set of users $\mathcal U$ and set of $k$ users' sensitive attributes $\mathcal A=\{A_1,\ldots,A_k\}$. 
Let the users' reputations and items' rankings be computed with Eq.~\eqref{eq:gramos}. If we apply Eq.~\eqref{eq:multi_fair} to recompute users' reputations and items' rankings, using the attributes $A_1,\ldots,A_k$, then the following property holds: for any two classes of any two sensitive attributes, $a\in A_i$ and $a'\in A_j$ ($A_i,A_j\in\mathcal A$), the set of users $\mathcal U(a)$ reputations and the set of users $\mathcal U(a')$ reputations have zero disparate reputation ($\mu_a=\mu_a'$).\hfill$\circ$
\end{theorem}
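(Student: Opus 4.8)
The plan is to exploit the affine structure of Eq.~\eqref{eq:multi_fair} together with the fact that the $k$-tuple cells $\mathcal U(l)$, $l\in A_1\times\ldots\times A_k$, form a partition of $\mathcal U$ (as already noted in the text preceding Eq.~\eqref{eq:multi_fair}). The whole argument reduces to two observations: the transformation makes every cell have the same post-processing mean, and every single-attribute class is a disjoint union of cells.

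First, I would compute the average reputation of each cell after the transformation. Fix a cell $\mathcal U(l)$ and average the updated scores $c_u=\mu+(c_u^N-\mu_l)\sigma/\sigma_l$ over $u\in\mathcal U(l)$. Since $\mu$ and $\sigma$ are constants not depending on $u$, and since $avg(\{c_u^N:u\in\mathcal U(l)\})=\mu_l$ by definition of $\mu_l$, the mean-centering term averages to zero, giving $avg(\{c_u:u\in\mathcal U(l)\})=\mu+(\mu_l-\mu_l)\sigma/\sigma_l=\mu$. Thus every cell has the same post-processing average, namely the global constant $\mu$, regardless of its size or of its internal spread $\sigma_l$; the scaling factor $\sigma/\sigma_l$ rescales deviations only and is irrelevant to the mean.

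Second, I would relate single-attribute classes to the cells. For any attribute $A_i\in\mathcal A$ and any class $a\in A_i$, the set $\mathcal U(a)$ is exactly the disjoint union of those cells whose $i$-th coordinate equals $a$, i.e. $\mathcal U(a)=\bigcup_{l:\,l_i=a}\mathcal U(l)$, which holds precisely because the cells partition $\mathcal U$. Because the average over a disjoint union is the size-weighted average of the per-cell averages, and each per-cell average equals $\mu$ by the first step, a size-weighted average of copies of $\mu$ is again $\mu$; hence $\mu_a=\mu$ for every class $a$ of every attribute.

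Finally, given any two classes $a\in A_i$ and $a'\in A_j$ of any two attributes (possibly $i=j$), the previous step yields $\mu_a=\mu=\mu_{a'}$, so the disparate reputation $\Delta(a,a')=\mu_a-\mu_{a'}=0$, as claimed. I do not expect a substantive obstacle: the algebra is immediate once one sees that mean-centering annihilates the cell mean. The only points requiring care are the combinatorial identity $\mathcal U(a)=\bigcup_{l:\,l_i=a}\mathcal U(l)$ with the size-weighting it induces, and the tacit non-degeneracy assumptions that each cell is non-empty with $\sigma_l\neq 0$, so that Eq.~\eqref{eq:multi_fair} is well defined.
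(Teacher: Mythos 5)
Your proposal is correct and follows essentially the same route as the paper's own proof: show that every $k$-tuple cell $\mathcal U(l)$ has post-processing mean $\mu$, decompose each single-attribute class $\mathcal U(a)$ as a disjoint union of such cells, and observe that a size-weighted average of sets all having mean $\mu$ is again $\mu$. The only difference is that you explicitly verify the first step by averaging the affine map $c_u=\mu+(c_u^N-\mu_l)\sigma/\sigma_l$ over a cell (a welcome addition, since the paper merely asserts it), and you correctly flag the tacit non-degeneracy assumption $\sigma_l\neq 0$ that the paper leaves implicit.
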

\begin{proof}
First, for any two classes of any two sensitive meta-attributes $\tilde{a}\in  A_1\times\ldots\times A_k$ and $\tilde{a}'\in  A_1\times\ldots\times A_k$, Eq.~\eqref{eq:multi_fair} makes the set of users $\mathcal U(\tilde{a})$ reputations and the set of users $\mathcal U(\tilde{a}')$ reputations have zero disparate reputation ($\mu_{\tilde{a}}=\mu_{\tilde{a}'}=\mu$). Next, we observe that: 
$$\mathcal U(a)=\left\{u\in\mathcal U(a_1,\ldots, a_k)\,:\, a_1\in A_1,\ldots,a_k\in A_k\text{ and }a_i=a\right\}$$ 
and
$$\mathcal U(a')=\left\{u\in\mathcal U(a_1,\ldots, a_k)\,:\, a_1\in A_1,\ldots,a_k\in A_k\text{ and }a_j=a'\right\}.$$
For each set of users $\mathcal U(a_1,\ldots, a_k)$, as noted before, the average reputation is the same, $\mu$. 
It remains to demonstrate that the average of a finite collection of finite sets with the same average $\mu$ is also $\mu$. 
To ease the notation, consider the sets of users $U_1,\ldots,U_k$ that have the same average reputation $\mu$. Subsequently, we observe that:
\begin{align*}
    \frac{1}{\sum_{i=1}^k|U_i|}\displaystyle\displaystyle\sum_{i=1}^k \displaystyle\sum_{u\in U_i} c_u  & = \frac{1}{\sum_{i=1}^k|U_i|}\displaystyle\sum_{i=1}^k |U_i|\frac{\sum_{u\in U_i} c_u}{|U_i|}\\
    & = \frac{\mu \sum_{i=1}^k |U_i|}{\sum_{i=1}^k|U_i|}= \mu
\end{align*}
This concludes the argument, and $\mu_a=\mu_{a'}=\mu$.
\end{proof}

\begin{prop}\label{prop:complex}
Given a set of users $\mathcal U$, a set of items $\mathcal I$, a set of ratings that users gave to items $\mathcal R$ and a set of user attributes $\mathcal A = \{A_1,\ldots,A_k\}$, the time-complexity of computing the iterative scheme in Eq.~\eqref{eq:gramos} for $N>0$ iterations followed by Eq.~\eqref{eq:multi_fair} is $\mathcal O\left(N |\mathcal U||\mathcal I|+k|\mathcal U|\right)$.~\hfill$\circ$
\end{prop}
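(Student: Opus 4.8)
The plan is to split the computation into its two phases --- the $N$ iterations of the scheme in Eq.~\eqref{eq:gramos} and the single post-processing pass of Eq.~\eqref{eq:multi_fair} --- to bound each separately, and then add the two bounds. First I would account for one iteration of Eq.~\eqref{eq:gramos}, using the sparse structure of $\mathcal R$. The denominator $\sum_{u\in\mathcal U}c_u^k$ is common to every item and is computed once in $\mathcal O(|\mathcal U|)$ time; each numerator $\sum_{u\in\mathcal U}R_{ui}c_u^k$ need only range over the users that actually rated $i$, so summing over all items costs $\mathcal O\bigl(\sum_{i\in\mathcal I}|\mathcal U_i|\bigr)$, i.e.\ a constant times the number of nonzero ratings, which is at most $|\mathcal U||\mathcal I|$. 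The reputation update is symmetric: summed over all users it costs $\mathcal O\bigl(\sum_{u\in\mathcal U}|\mathcal I_u|\bigr)$, again bounded by $|\mathcal U||\mathcal I|$. Hence one iteration is $\mathcal O(|\mathcal U||\mathcal I|)$ and the whole iterative phase is $\mathcal O(N|\mathcal U||\mathcal I|)$.

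Next I would bound the post-processing step, whose key quantities are the per-group statistics $\mu_l,\sigma_l$ over the partition $\{\mathcal U(l)\}_{l\in A_1\times\cdots\times A_k}$. Assigning each user $u$ to its meta-class $l=(A_1(u),\ldots,A_k(u))$ requires reading its $k$ attribute values, so forming the partition costs $\mathcal O(k|\mathcal U|)$; because the sets $\mathcal U(l)$ partition $\mathcal U$, accumulating the sums and squared deviations that define every $\mu_l$ and $\sigma_l$ is a single pass in $\mathcal O(|\mathcal U|)$. The minima $\mu=\min_l\mu_l$ and $\sigma=\min_l\sigma_l$, the rescaling $c_u=\mu+(c_u^N-\mu_l)\sigma/\sigma_l$ applied to each user, and the final ranking recomputation $r_i=\sum_{u}R_{ui}c_u/\sum_{u}c_u$ then cost $\mathcal O(|\mathcal U|)$, $\mathcal O(|\mathcal U|)$, and $\mathcal O(|\mathcal U||\mathcal I|)$ respectively, giving $\mathcal O(k|\mathcal U|+|\mathcal U||\mathcal I|)$ for Eq.~\eqref{eq:multi_fair}.

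Adding the two phases yields $\mathcal O(N|\mathcal U||\mathcal I|)+\mathcal O(k|\mathcal U|+|\mathcal U||\mathcal I|)=\mathcal O(N|\mathcal U||\mathcal I|+k|\mathcal U|)$, since for $N\ge 1$ the stray $|\mathcal U||\mathcal I|$ term is absorbed into $N|\mathcal U||\mathcal I|$. The one place that needs care --- and the step I expect to be the main obstacle --- is the enumeration of the meta-classes: there are $\prod_{j=1}^{k}s_j$ candidate $k$-tuples, which is exponential in $k$, so a naive loop over all of them would destroy the claimed bound. The remedy is to store only the nonempty groups, e.g.\ in a dictionary keyed by each user's tuple; since the nonempty sets $\mathcal U(l)$ partition $\mathcal U$ there are at most $|\mathcal U|$ of them, so both the minimisation over groups and the statistics stay linear in $|\mathcal U|$, with the only additional $k$ factor arising from reading each user's length-$k$ attribute tuple.
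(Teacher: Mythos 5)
Your proposal is correct and follows essentially the same decomposition as the paper's own proof: bound one iteration of Eq.~\eqref{eq:gramos} by $\mathcal O(|\mathcal U||\mathcal I|)$, multiply by $N$, then bound the post-processing of Eq.~\eqref{eq:multi_fair} by $\mathcal O(k|\mathcal U|+|\mathcal U||\mathcal I|)$ using the fact that the sets $\mathcal U(l)$ partition $\mathcal U$ and that each membership check costs $\mathcal O(k)$. Your treatment is in fact slightly more careful than the paper's on two points --- exploiting the sparsity of $\mathcal R$ and explicitly handling the potentially exponential number of empty $k$-tuples by iterating only over nonempty groups --- but these are refinements of the same argument rather than a different route.
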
 
\begin{proof}
First, we perform Eq.~\eqref{eq:gramos} for $N$ iterations, with each iteration composed of two steps. The first step computes the rankings of $|\mathcal I|$ items as a weighted average of the users' ratings by the users' reputations, which takes $\mathcal O(|\mathcal U|)$ time.  
For one iteration, the rankings' update requires $\mathcal O(|\mathcal I||\mathcal U|)$. The second step of each iteration requires to compute reputations, given that there are $|\mathcal U|$ users for an update of their reputation is needed. Updating the reputation of a user $u$ implies running $\mathcal I_u$ operations. which are $\mathcal I_u=\mathcal I$ in the worse case. The second step has $\mathcal O(|\mathcal I||\mathcal U|)$ operations as well. For $N$ iterations, the time-complexity is $\mathcal O(N|\mathcal I||\mathcal U|)$. We then compute Eq.~\eqref{eq:multi_fair} in two steps. The first step of Eq.~\eqref{eq:multi_fair} requires updating $|\mathcal U|$ users' reputations. 
Computing $\mu_l$ and $\sigma_l$ can be done linearly in the number of users in $\mathcal U(l)$. Since $\bigcup_{l} \mathcal U(l)=\mathcal U$ is a users' partition, we can compute all the $\mu_l$ and $\sigma_l$ in $\mathcal O(|\mathcal U|)$. 
Additionally, we may verify if a user $u\in\mathcal U(l=(a_1,\ldots,a_k))$ in $\mathcal O(k)$, by checking the $k$ classes values of the $k$ attributes in a table indexed by the users.  
The second step of Eq.~\eqref{eq:multi_fair} updates the items' rankings, which again takes $\mathcal O(|\mathcal I||\mathcal U|)$ time. 
Hence, Eq.~\eqref{eq:multi_fair} requires $\mathcal O(k |\mathcal U|+|\mathcal I||\mathcal U|)$ steps. The total time-complexity is, thus, $\mathcal O\left(N |\mathcal U||\mathcal I|+k|\mathcal U|\right)$.
\end{proof}

In general, the number of users' attributes is smaller than the number of items. In this scenario, the time-complexity of Proposition~\ref{prop:complex} may be simplified to $\mathcal O\left(N |\mathcal U||\mathcal I|\right)$, i.e., the same order of complexity of running solely Eq.~\eqref{eq:gramos}.

\section{Experimental Evaluation}\label{sec:evaluation}
In this section, we evaluate our multi-attribute mitigation approach in order to answer three key research questions: 

\vspace{2mm} \noindent \textbf{RQ1} Does our method mitigate bias over multiple attributes jointly?

\vspace{1mm} \noindent \textbf{RQ2} Does our method preserve system robustness against attacks? 

\vspace{1mm} \noindent \textbf{RQ3} How is ranking quality affected by the disparate reputation? 

\subsection{Metrics}

\vspace{2mm} \noindent \textit{Disparate Reputation~\citep{RamosB20}}. Let $A_j \in \mathcal A$ be an attribute of the users, with more than one class $A_j = \{a_{j,1}, \ldots , a_{j,s_j}\}$. Considering a class $a_j \in A_j \in \mathcal A$, we denote as $\mu_{a_j} = avg(\{c_u\}_{u \in \mathcal U_{a_j}})$ the average reputation of the users characterized by that class; this average reputation is used a proxy of user group reputation. The corresponding disparate reputation metric is computed as the difference between two averaged user group reputations, i.e., $\Delta(a_j,a_j') = \mu_{a_j} - \mu_{a_j'}$. The disparity is 0 when reputation averages are the same ($\mu_{a_j} = \mu_{a_j'}$). Negative values point that class $\mu_{a_j} $ has users with higher reputation values and, vice-versa, for the class $\mu_{a_j'}$ and positive values. 
The central role of the proposed work is to ensure for each demographic group the reputations among users follow identical distributions (in the statistical sense). Therefore, the disparate reputation metric measures how different are the averages of reputation distributions of different demographic groups.

\vspace{2mm} \noindent \textit{Robustness~\citep{saude2017robust}}. Let $r=(r_1,\ldots,r_m)$ be the vector with item rankings in the absence of attacks and $r_{attacked}=(r_1',\ldots,r_m')$ be the vector with items' rankings in the presence of attacks. 
To evaluate the robustness, we use the Kendall Tau ($\tau$) metric, applied to the rankings obtained without attackers against the rankings obtained when considering attackers. Specifically, we define the robustness as $\tau(r,r_{attacked})$. This metric monitors the ordinal association between two quantities. Intuitively, the Kendall correlation between two variables is higher when observations are identical and lower otherwise. Scores close to 0 mean that the system is vulnerable to attacks; conversely, scores increase in tandem with robustness. 
 
\begin{figure}[!hb]  
\centering      
\subfigure[reputation scores]{\includegraphics[width=0.4\textwidth]{1attGender1_.png}}\qquad
\subfigure[reputation scores]{\includegraphics[width=0.4\textwidth]{2att3_.png}}
\\
\subfigure[reputation scores]{\includegraphics[width=0.4\textwidth]{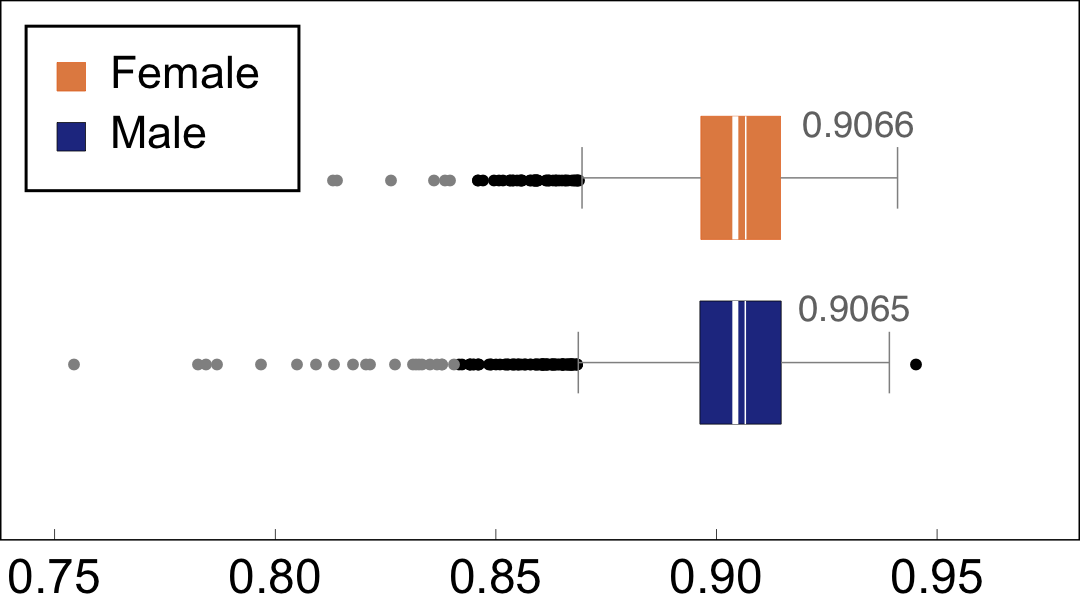}}\qquad
\subfigure[reputation scores]{\includegraphics[width=0.4\textwidth]{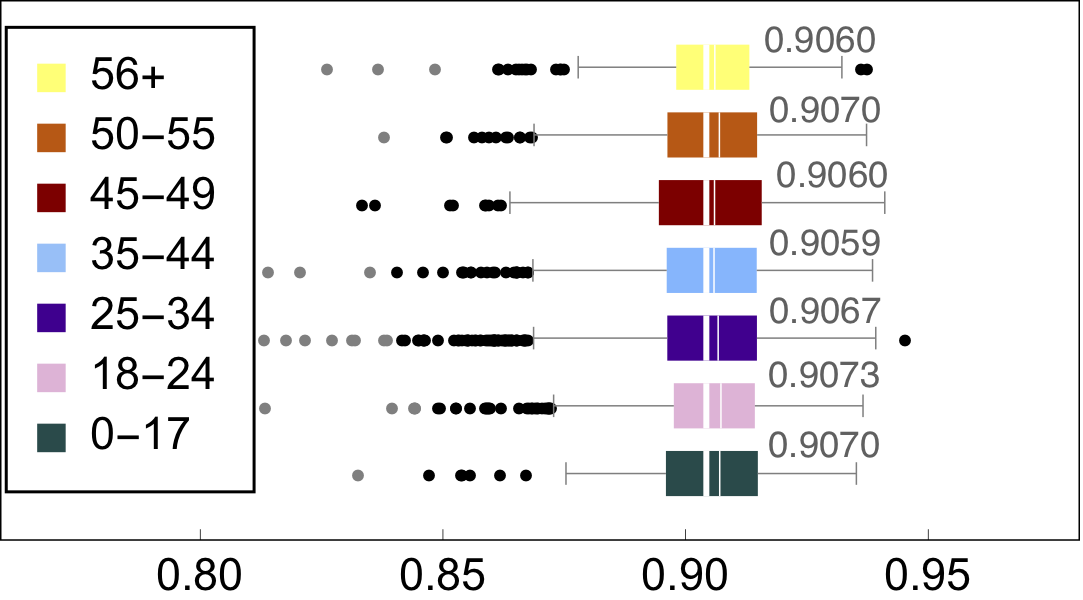}}

\caption{Box-whisker-chart for reputations of users resulting from Eq.~\eqref{eq:gramos} in (a) and (b), respectively, and from Eq.~\eqref{eq:gramos} plus Eq.~\eqref{eq:multi_fair} in (c) and (d), respectively, with $\lambda = 0.5$, for the multiple attributes \underline{gender} and \underline{age} (ML-1M).}
\label{fig:genderAgeBWC} 
\end{figure}

\subsection{Disparate Reputation Evaluation (RQ1)}\label{sec:ev_disp_rep}

To answer this question, we investigate the extent to which a bias on users’ reputations exists, when two sensitive attributes are considered at the same time.  
Nonetheless, the following property holds.

\hlbox{Observation 3}{Our approach, presented in Section~\ref{sec:mit_mul}, works with {\em any} set of sensitive attributes; hence, the cardinality of the set $\mathcal A$ can be higher than 2. Due to the limitations of the existing datasets, this paper focuses on the case of two sensitive attributes.~\hfill $\diamond$}

In what follows, we first present results for each dataset individually, then connect together all the findings in the final discussion.

\vspace{1mm} \noindent \textbf{MovieLens-1M}.
We first characterize the disparate reputation, after applying the original method in Eq.~\eqref{eq:gramos}. The results are reported in a Box-whisker-chart (BWC) representing the average reputation for each group, considering gender and age as attributes. \figurename~\ref{fig:genderAgeBWC}~(a) shows us that using solely Eq. \ref{eq:gramos} leads to a consistent reputation disparity on the gender-based groups. Specifically, on average, male users have higher reputation values than female users, yielding to a gender bias. 
Then, we test the null hypothesis for both attributes that the mean difference is 0 at the 5\% level based on the LT test. 
The hypothesis is rejected, confirming a gender bias.

\begin{table}[!t]
\centering
\begin{tabular}{|c|ccccccc|} \hline
 & $<18$ & $18-24$ & $25-34$ & $35-44$ & $45-49$ & $50-55$ & $>55$
\\ \hline
$<18$ &  -- & \textbf{-0.0090} & \textbf{-0.0142} & \textbf{-0.0161} & \textbf{-0.0159} & \textbf{-0.0153} & \textbf{-0.0164} \\
$18-24$ & -- & -- & \textbf{-0.0053} & \textbf{-0.0072} & \textbf{-0.0070} & \textbf{-0.0064} & \textbf{-0.0075}  \\
$25-34$ & -- & -- & -- & \textbf{-0.0019} & -0.0017 & -0.0011 & -0.0022 \\
$35-44$ & -- & --& -- & -- & 0.0002 & 0.0008 & -0.0003 \\
$45-49$ & -- & -- & -- & -- & -- & 0.0006 & -0.0005 \\
$50-55$ & -- & -- & -- & -- & -- & -- & -0.0011 \\ \hline
\end{tabular}

\caption{DR of reputations resulting from Eq.~\eqref{eq:gramos}, for attribute age (ML-1M). 
The LT tests for the reputation scores are denoted as normal text if $H_0$ is not rejected and bold text if $H_0$ is rejected.}
\label{tab:3}
\end{table}

Figure~\ref{fig:genderAgeBWC}~(b) uncovers a consistent reputation disparity on the attribute age, when applying only Eq. \ref{eq:gramos}. 
Users belonging to younger groups have, on average, a lower reputation than older users, leading to a bias on the attribute age. 
The disparate reputation metric, when only Eq.~\eqref{eq:gramos} is used, yields the results in Table \ref{tab:3}, which reveal a prominent bias. Table~\ref{tab:3} also reports the LT test for users' reputations, assessing if the null hypothesis that the mean difference is 0 ($H_0$) or not ($H_1$) at a 5\% confidence level. 
We filled only the up-triangular part of the table, since the DR metric anti-commutes (and the LT commutes); the low-triangular part is equal to the symmetric of the up-triangular one. (the low-triangular part is equal to the up-triangular one).

When we mitigate bias for both the gender and age attributes with Eq.~\eqref{eq:multi_fair}, we obtain the BWC for reputations under the gender attribute of Figure~\ref{fig:genderAgeBWC}~(c). Under this setting, we get a disparate reputation of $\Delta(a,a') \approx 0$, for each pair of gender-based groups, finally mitigating the bias on the attribute gender. This time, the null hypothesis that the mean difference is 0 is not rejected at the $5\%$ confidence level, based on the LT test. This result confirms that we mitigated the bias on the reputations for these two classes. At the same time, for the attribute age, we achieve the results in %Table \ref{tab:4} and
Figure~\ref{fig:genderAgeBWC}~(d). 
Now, the null hypothesis that the reputations’ mean difference is 0 ($H_0$) is not rejected at the 5\% confidence level, using the LT test, for any pair of age classes.

We then tested the proposed approach on three attributes: gender, age and job. We filtered the obtained groups of users to select only those with more than two users. By doing so, we obtained 195 groups of users. The Box-whisker-charts in  \figurename~\ref{fig:genderAgeJobBWC} (a) and (b) respectively report the results of the reputation-based ranking system in Eq.~\eqref{eq:gramos} (i.e., without accounting for disparate reputation) and those of the system deriving from Eq.~\eqref{eq:multi_fair}, introducing reputation independence. Given the large number of groups, to evaluate this scenario, we only look at the trends of the reputations distributions. \figurename~\ref{fig:genderAgeJobBWC}~(a) shows us that using solely Eq. \ref{eq:gramos} leads to a consistent reputation disparity, with reputation being distributed unequally across the groups. This phenomenon is clearly mitigated \figurename~\ref{fig:genderAgeJobBWC}~(b), where all the demographic groups have, on average, the same reputation.

\begin{figure}
\centering      
\subfigure[Reputation scores without reputation independence]{\includegraphics[width=0.4\textwidth]{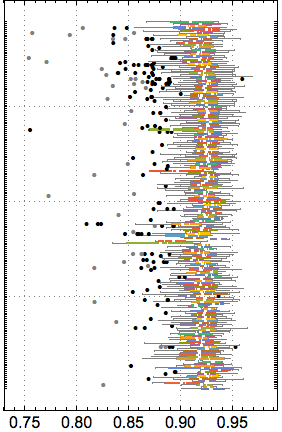}}\qquad
\subfigure[Reputation scores with reputation independence]{\includegraphics[width=0.4\textwidth]{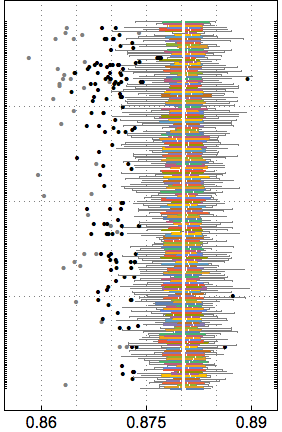}}
\caption{Box-whisker-chart for reputations of users resulting from Eq.~\eqref{eq:gramos} in (a) and (b), respectively, and from Eq.~\eqref{eq:gramos} plus Eq.~\eqref{eq:multi_fair} in (c) and (d), respectively, with $\lambda = 0.5$, for the multiple attributes \underline{gender}, \underline{age} and \underline{job} (ML-1M).}
\label{fig:genderAgeJobBWC} 
\end{figure}

\vspace{1mm} \noindent \textbf{BookCrossing}. 
We start by assessing the disparity originated by the original method in Eq.~\eqref{eq:gramos}. In a Box-whisker-chart (BWC) that considers age and location as attributes, Figure~\ref{fig:BC_age_loc}~(a) shows us that using solely Eq. \eqref{eq:gramos} leads to a consistent reputation disparity on age-based groups. Specifically, we can observe a pattern according to which, the younger are the users, the larger average reputation values the class has, thus yielding a bias on the attribute age. Table~\ref{tab:BC_1_DR} quantifies this disparity. To assess its significance, we test the null hypothesis that the mean difference between two classes of the attribute is 0, at the 5\% confidence level, under an LT test. The results in Table~\ref{tab:BC_1_DR} show that the disparity actually occurs only when the age gap between the users is large and only affects the groups of elder users, which are the less represented.

\begin{figure}%[!ht]  
\centering      
\subfigure[reputation scores]{\includegraphics[width=0.4\textwidth]{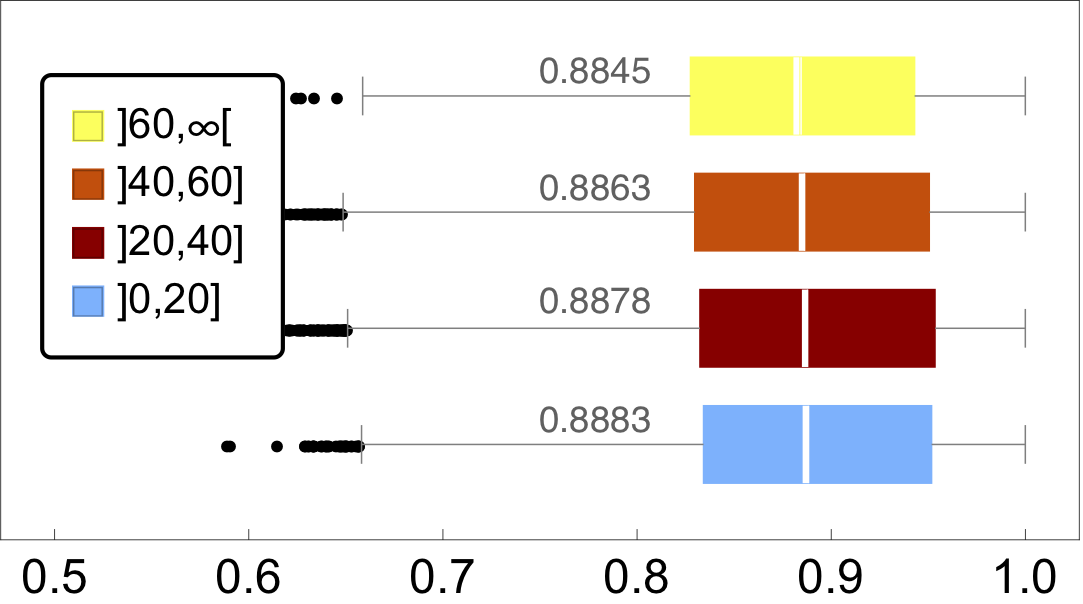}}\qquad
\subfigure[reputation scores]{\includegraphics[width=0.4\textwidth]{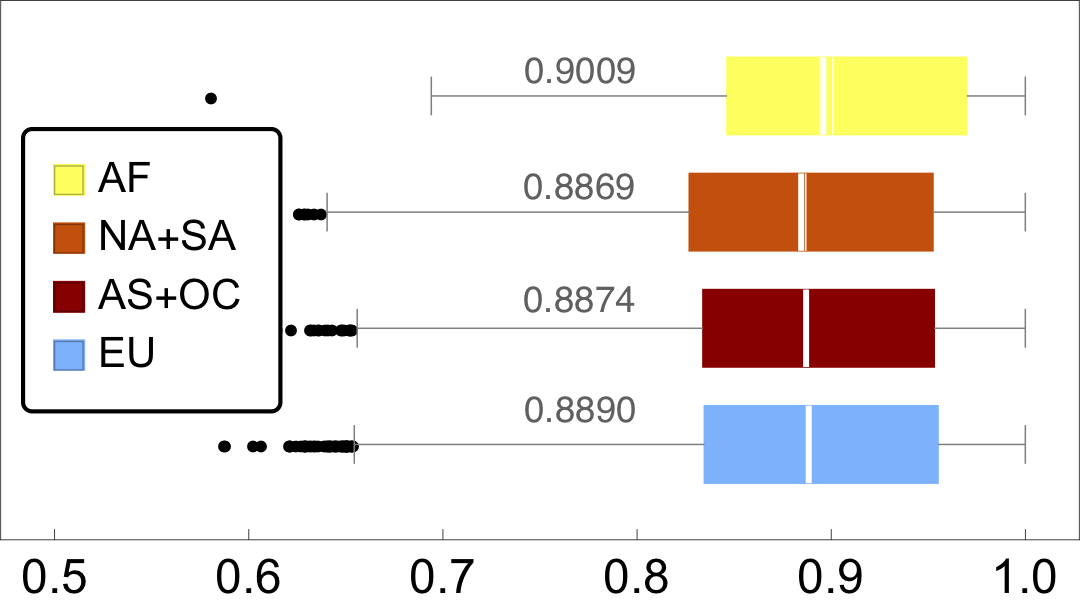}}
\\ 
\subfigure[reputation scores]{\includegraphics[width=0.4\textwidth]{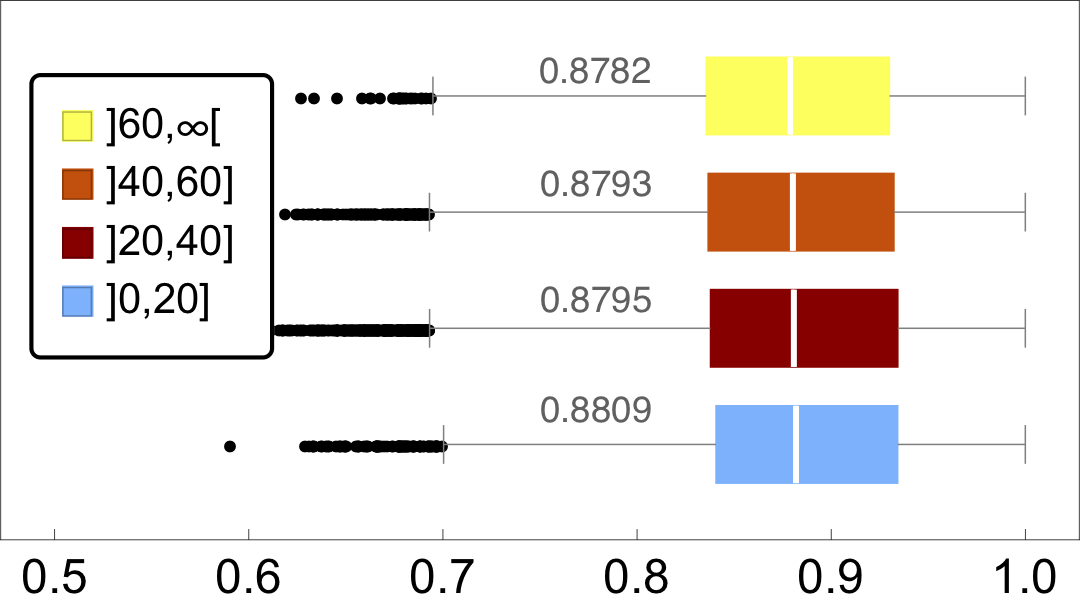}}\qquad
\subfigure[reputation scores]{\includegraphics[width=0.4\textwidth]{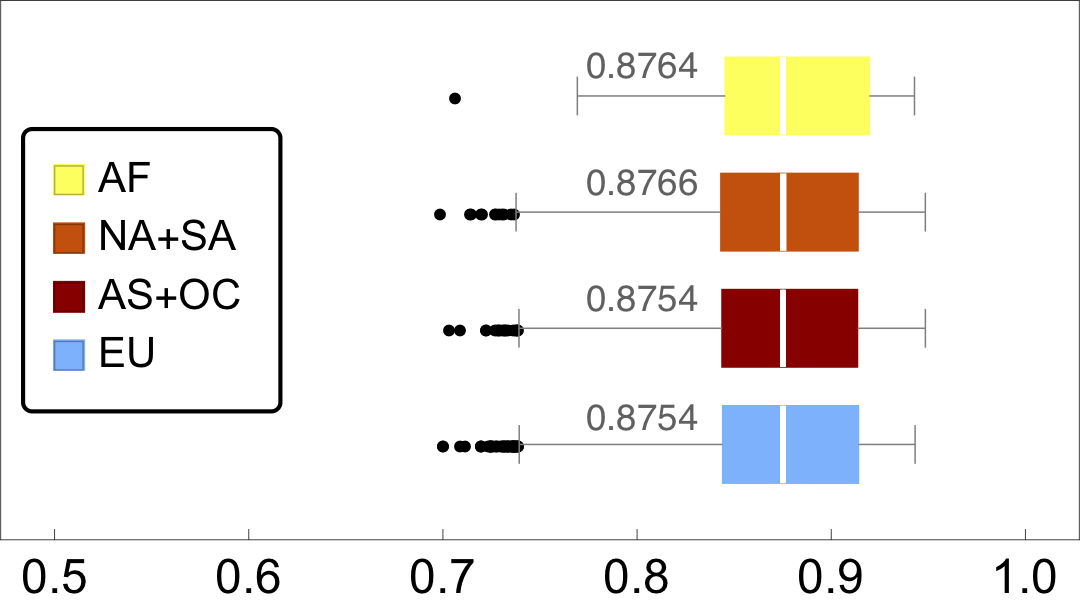}}
\vspace{-0.2cm}
\caption{Box-whisker-chart for reputations of users resulting from Eq.~\eqref{eq:gramos} in (a) and (b), respectively, and from Eq.~\eqref{eq:gramos} plus Eq.~\eqref{eq:multi_fair} in (c) and (d), respectively, with $\lambda = 0.5$ (where $\lambda$ should verify $\lambda\in]0,1[$) to give a medium penalization to the users with preferences that differ from the rest of the community, for the multiple attributes \underline{age} and \underline{location} (BookCrossing).} 
\label{fig:BC_age_loc} 
\end{figure}

\begin{table}%[!ht]
\centering
\begin{tabular}{|c|cccc|} \hline
 & $<20$ & $20-40$ & $40-60$ &  $>60$
\\ \hline
$<20$ &  -- & 0.0004 & \textbf{0.0020} & \textbf{0.0048} \\
$20-40$ & -- & -- & \textbf{0.0016} & \textbf{0.0044}   \\
$40-60$ & -- & -- & -- & \textbf{0.0028}  \\ \hline
\end{tabular}

\caption{DR of reputations resulting from Eq.~\eqref{eq:gramos}, for attribute age (BookCrossing). 
The LT tests for the reputations are denoted as normal text if $H_0$ is not rejected and bold text if $H_0$ is rejected.}
\label{tab:BC_1_DR}
\end{table}

\begin{table}%[!ht]
\centering
\begin{tabular}{|c|cccc|} \hline
 & EU & AS+OC & NA+SA & AF 
\\ \hline
EU & -- & 0.0014 & \textbf{0.0039} & \textbf{-0.0075}  \\
AS+OC  & -- & -- & \textbf{0.0025} & \textbf{-0.0089}  \\
NA+SA & -- & -- & -- & \textbf{-0.0114}  \\
\hline
  \end{tabular}

\caption{DR of reputations resulting from Eq.~\eqref{eq:gramos}, for attribute location (BookCrossing). 
The LT tests for the reputations are denoted as normal text if $H_0$ is not rejected and bold text if $H_0$ is rejected.}
\label{tab:BC_2_DR}
\end{table}

Moving to the location attribute, Figure~\ref{fig:BC_age_loc}~(b) shows the impact of Eq.~\eqref{eq:gramos} in a Box-whisker-chart (BWC). Results surprisingly indicate that, under this setting, on average, the smallest group (AF) obtains the highest reputation values. We conjecture that this might be because the group might represent a small and cohesive community. To deeply investigating this possible bias, Table~\ref{tab:BC_2_DR} quantifies the DR values and assesses if the mean difference between two classes of the attribute is 0, at the 5\% confidence level, under the LT test; a disparate reputation only occurs between European and American users, with the latter having a higher average reputation. This assessment of disparate reputation on the BookCrossing dataset leads us to our fourth observation.

\hlbox{Observation 4}{Under a multi-class attribute setting, considering a finer granularity when creating the classes facilitates the emergence of disparities (see the difference between the age attribute in the two datasets). Besides, multi-class attributes where two of the classes represent the vast majority of the user base (see location in BookCrossing) behave as binary attributes, leading to possibly uncovering disparities only between the two biggest classes.\hfill$\diamond$}

BookCrossing becomes an interesting benchmark to evaluate our approach, given that we are combining attributes where disparities do not occur for every combination of the classes. When introducing our multi-attribute reputation independence with Eq.~\eqref{eq:multi_fair}, Figure~\ref{fig:BC_age_loc}~(c) and 
the DR values show a disparity $\approx 0$. The LT tests
confirm that we cannot reject the null hypothesis. Thus, we can mitigate disparate reputation for attribute age. The same occurs for attribute location as observed in the BWC in Figure~\ref{fig:BC_age_loc}~(d), with the DR scores all $\approx 0$, 
and the LT tests 
confirm that we cannot reject the null hypothesis.

\subsection{Robustness Evaluation (RQ2)}\label{sec:ev_rob}
To answer the second question, we measured the system robustness with the Kendall Tau ($\tau$) metric, by comparing the rankings obtained without attackers and the rankings obtained after attacks~\citep{LiYHC12,saude2017robust}. Concretely, an attacker can be either a person or a bot interested in diminishing or increasing the rankings of a given item or set of items. ). Due to space constraints and because the results obtained with the two datasets are almost identical in terms of underlying patterns, we provide results only for ML-1M. Specifically, our study investigates the system's robustness before and after applying our mitigation method in Eq.\ref{eq:multi_fair}, when the following kinds of spamming/attacks are carried out:

\begin{itemize}
\item \emph{Random spam} consists of a set of users giving random ratings to a random set of items with fixed size;
\item \emph{Love/hate attack} consists of a set of users giving the highest rating to a target item and the lowest rating to a random set of items with fixed size;
\item \emph{Hate/love attack} consists of a set of users giving the lowest rating to a target item and the highest rating to a random set of items with fixed size.
\end{itemize}

For the sake of reproducibility, in all experiments, for love/hate and hate/love attacks, we chose item 1 and item 3 as the target items, respectively. For the three types of attacks, we selected the fixed size of the random set of items to be 10. For the attacked user, we randomly selected their attributes classes. 

First, we test the random spamming, by simulating a proportion of spammers ranging from $0.10$ to $0.35$ of the total number of ratings. 
The results in Figure~\ref{fig:exp_rob}~(a) show us that the robustness for mitigation methods in Eq. \ref{eq:gramos} and Eq. \ref{eq:multi_fair} is comparable, whereas both of them led to a slight improvement concerning the Arithmetic Average (AA). 

Second, we simulated two different attacks to the most voted item, ranging the proportion of attackers from $0.10$ to $0.35$ of the total number of voters of the target item. Figure~\ref{fig:exp_rob}~(b) depicts the results under the love/hate attack. Notice that, by mitigating disparate reputation, the attack is less effective for both methods. Indeed, robustness is significantly higher than the one obtained with the AA. In Figure~\ref{fig:exp_rob}~(c), the hate/love scenario leads to similar observations. 

\hlbox{Observartion 5}{In general, simultaneously mitigating biases on two sensitive attributes does not prevent the system's robustness, similarly to the single-attribute method. Furthermore, the method proposed in this paper increases the system's robustness when compared with the arithmetic average (AA).}

Regarding the \textit{BookCrossing} dataset, the results follow the same trend, and, therefore, the plots are omitted. 
In other words, the results of using Eq.~\eqref{eq:gramos} and the results of using Eq.~\eqref{eq:gramos} and Eq.~\eqref{eq:multi_fair} are essentially indistinguishable. 

\begin{figure}[!ht] 
\centering      
 \subfigure[Random Spam]{\label{fig:e}\includegraphics[width=0.295\textwidth]{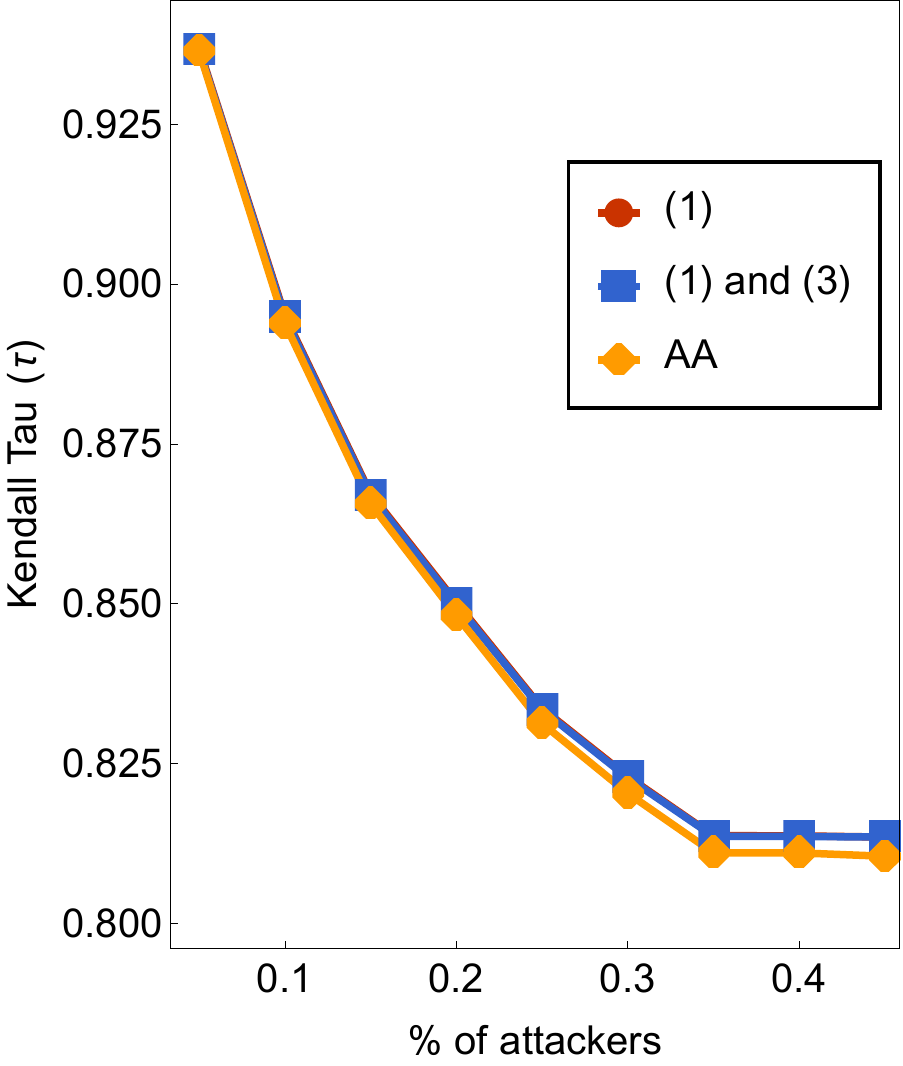}}\qquad
\subfigure[Love/hate]{\label{fig:f}\includegraphics[width=0.295\textwidth]{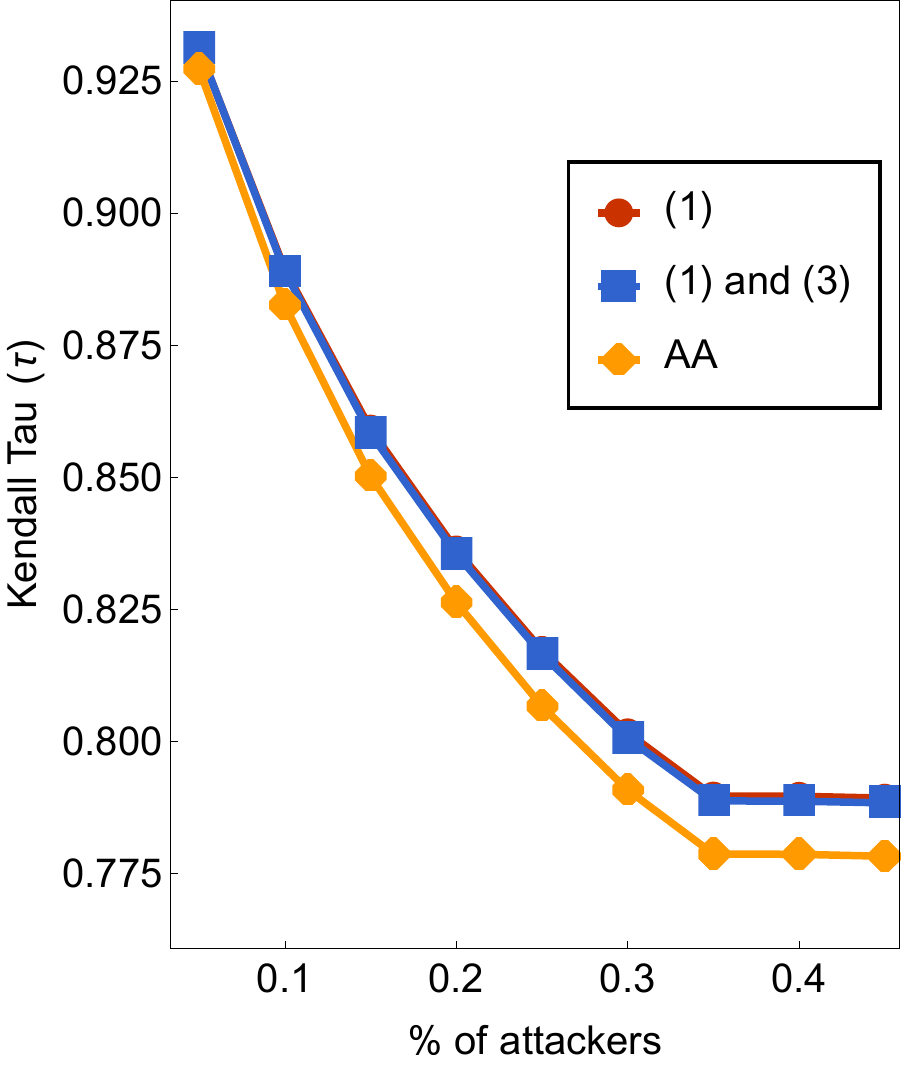}}\qquad
 \subfigure[Hate/love]{\label{fig:g}\includegraphics[width=0.295\textwidth]{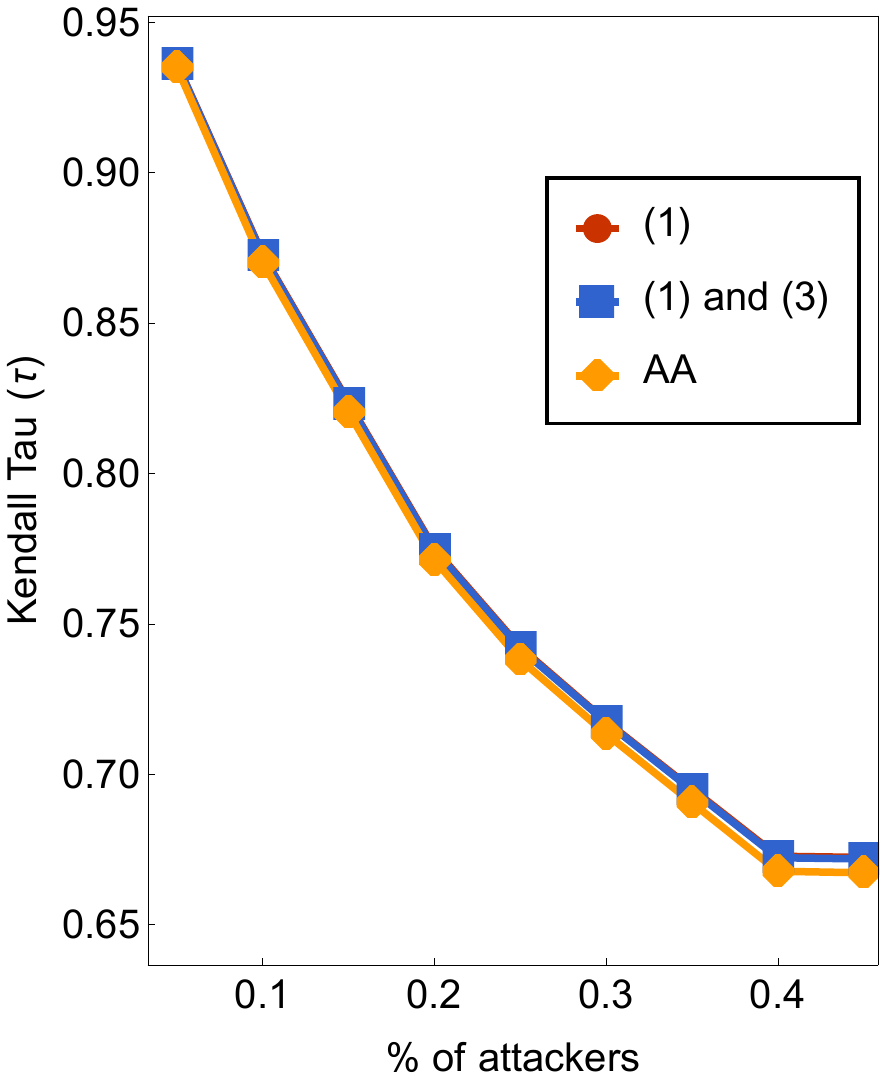}}
\caption{Robustness of the system before and after bias mitigation, with a percentage of attacked users relatively to the original number of users between 5\% and 45\%. Notice that the results of using Eq.~\eqref{eq:gramos} and the results of using Eq.~\eqref{eq:gramos} and Eq.~\eqref{eq:multi_fair} are essentially indistinguishable (ML-1M).}
\label{fig:exp_rob} 
\end{figure}

\begin{table}[t]
\centering
\subfigure[]{
\begin{tabular}{|c|ccccc|}
\hline
\multicolumn{6}{|c|}{\textsf{Dataset \textit{ML1M}}}\\ \hline
& $\emptyset$ & \textsf{Rep$_{\text{Gender}}$} & \textsf{Rep$_{\text{Age}}$} & \textsf{Rep$_{\text{Gender+Age}}$} & \textsf{Rep$_{\text{Gender+Age+Job}}$}
\\ %\midrule
$\tau$ & 0.9950 & 0.9954 & 0.9959 & 0.9961 & 0.9982\\
RMSE & 0.196207 & 0.196164 & 0.196164 & 0.194862 & 0.196171
\\
\hline
\end{tabular}
}
\qquad
\subfigure[]{
\begin{tabular}{|c|cccc|}
\hline
\multicolumn{5}{|c|}{\textsf{Dataset \textit{BookCrossing}}}\\ \hline
& $\emptyset$ & \textsf{Rep$_{\text{Age}}$} & \textsf{Rep$_{\text{Loc}}$} & \textsf{Rep$_{\text{Age+Loc}}$}
\\ 
$\tau$ & 0.9914 & 0.9914 & 0.9913 & 0.9914\\
RMSE & 0.275991 & 0.275215 & 0.275104 & 0.274718
\\
\hline
\end{tabular}
}
\caption{Quality of the obtained rankings for both \textit{ML-1M} and \textit{BookCrossing} datasets, respectively (a) and (b). \textsf{Rep} denotes the intervention on the reputation scores and its subscript indicates the considered sensitive attribute (\textsf{Gender} for gender, \textsf{Age} for age, \textsf{Loc} for location, \textsf{Job} for job). } 
\label{tab:eff}
\end{table} 

\subsection{Ranking Quality (RQ3)}

Finally, to evaluate the impact of the proposed method on ranking quality, we use the Kendall Tau~\citep{kendall1938new} with AA as the ground truth, as done in~\cite{LiYHC12}. Moreover, to assess to what extent the produced rankings reflect the individual ratings, we compute the RMSE, by splitting the data into training and test, where 90\% of the ratings was used to shape the rankings and the remaining 10\% for testing.
We report the observed $\tau$ and RMSE for each of the attributes considered in Table~\ref{tab:eff}.  When observing the $\tau$ values, we notice that, generally, the Kendall Tau   slightly improves when we mitigate bias relative to a grouping.  Moreover, when observing the {\em ML1M} dataset, when we consider more attributes (i.e., smaller groups), the values of $\tau$ increase. This improvement means that our approach yields a rankings' order closer to AA, yet assigning different relevance  (reputation) to users, w.r.t.~(\ref{eq:gramos}).  When we consider the $\tau$ values returned by {\em BookCrossing} dataset, we can observe that, when considering different groupings of the users, the values remain constant with respect to the state-of-the-art method we compare with (i.e., the one from~\cite{saude2017robust} that, in turn outperforms~\cite{li2012robust} in terms of robustness). 
This is a sign that, {\em when providing reputation independence, ranking quality is not affected}. This confirms the positive impact that our approach can provide to both the users and the platform, since the platform does not have to compromise ranking quality to provide equity in terms of reputation and robustness in case of attacks. This can generate trust on the users, since the rankings they interact with would (i) reflect their preferences, (ii) be unbiased on their sensitive attributes, (iii) be robust to malicious ratings. 

Considering RMSE, we can observe that {\em ML1M} returns a very low error if we consider the original rating scale, which was in the range [1,5]. {\em BookCrossing} show a slightly higher error, which we conjecture might be due to the fact that the dataset is larger; hence, providing a unique item ranking that reflects the  preferences of all the users is a more challenging task. Also for the RMSE, we can observe that, regardless of the dataset, the values remain constant, thus confirming the capability of our approach to enable reputation independence and robustness, while keeping ranking quality stable.

 In conclusion, the reputation concept treats users differently, which leads to a ranking with a bias for specific users' attributes. 
With the proposed approach, for a specific attribute, we mitigate bias. 
With our method, the concept of reputation still plays a role inside each group with a particular attribute value, but it does not cause bias. 
So, we get ``closer'' to the average as AA does not treat groups differently. Furthermore, with our approach, we also do not treat demographic groups differently. 

This leads us to our final observation, which connects the results from the three perspectives we analyzed (namely, reputation independence, robustness, and ranking effectiveness.

\hlbox{Observation 6}{The introduction of reputation independence allows to produce unbiased rankings w.r.t. to the sensitive attributes of the users, thus reflecting better their preferences.  This result can be achieved without affecting neither the platform in terms of the robustness it offers to the users (the values of $\tau$ remain unaltered), nor the users, since ranking quality in terms of RMSE remains the same. }

\subsection{Discussion}
Reputation-based ranking systems aim to rank the items by ensuring the community's preferences as a whole are reflected in the way items are sorted. It thus becomes essential to compute less biased formulations of user reputation, to weigh individual preferences without deterring other system properties~\citep{SaDAJSK18}. 

Past work in reputation-based ranking systems showed that approaches that compute reputation scores often make heavily biased decisions~\citep{RamosB20,MayWH19}. These results spurred investigations on the revision of reputation-based ranking algorithms and result in classic systems to mitigate such biased on even unfair decisions. Our theoretical formulation connects to previous studies in reputation-based ranking systems. Under this view, we believe that our empirical results suggest that a multi-attribute mitigation method can preserve the essential properties of a non-personalized ranking system.  
Moreover, these aspects are not guaranteed simultaneously by any other state-of-the-art method applied to the same class of algorithms. Our extensive experiments provide evidence that our method results in less biased reputation scores and can lead to more robust systems against attacks.  

The results we obtained with ML-1M have highlighted that gender is a central source of bias, leading to the highest disparate reputation estimates. When gender is not available, as in  BookCrossing, capturing disparate reputation phenomena via other characterizing attributes, such as age or geographic provenience, becomes more challenging. While our results have shown that the approach proposed in this paper can effectively mitigate disparities, regardless of the attributes it treats, the assessment of such phenomena can be challenging when possible sources of disparity are deeply hidden in the data.

\subsection{Limitations}

Our mitigation method is flexible to incorporate more elaborate conditions with more than two attributes. However, our study also embraces some possible limitations presented in what follows. 

\vspace{1mm} \noindent \textbf{Data-related limitations}. The datasets we considered in this study offered different sensitive attributes that characterize users or do not allow to arrange the same classes for a given attribute to have statistically valid results. It follows that we could not coherently compare the impact on a given attribute in a different domain (we analyzed the case of gender in our results' discussion). If new datasets offering the same set of sensitive attributes become available, it will be interesting to analyze this perspective. Furthermore, not having more than one dataset offering both ratings and demographic attributes in a single domain (e.g., movies or books) also does not allow us to provide a characterization of how demographic attributes impact on the reputation of the users in that domain. Again, the appearance of new datasets would allow deepening the inspection of such phenomena.

\vspace{1mm} \noindent \textbf{Evaluation-related limitations}. First, we evaluated our approach with the Disparate Reputation metric, and with the statistical two-sample location test (LT). The LT test requires the population of each class to have a significant number of users and, therefore, might be limiting in scenarios with nice user groups. 
The DR does not suffer from that problem, but it only uses average reputations. Therefore, as future work, we may design other evaluation metrics that may better capture the concept of reputation bias. 
Second, assessing the effectiveness of our approach in terms of accuracy of the ranking is a challenging aspect due to the lack of ground truth of what a good non-personalized ranking is. 
This fact connects to the lack of evaluation metrics to assess the accuracy for groups of users, which is also an open issue highlighted in the group recommendation research area~\citep{Boratto16}.
Third, plugging our solution into a specific reputation-based ranking system means that we could not assess the impact of our solutions on different types of ranking. As we stated in Observation~2 (Section~\ref{sec:mit_mul}), our solution can be embedded in {\em any} ranking system that computes rankings as a weighted average of the ratings. However, evaluating our solution on other reputation-based ranking systems (as those in Sec.~\ref{sec:related}) is left as future work. Lastly, we assessed the ranking system as a whole to provide actionable insights to service providers. However, a more fine-grained analysis would allow us to understand how robust is the system when bribing different demographic groups (e.g., assessing if minorities are also more vulnerable). To focus on our core contribution, we leave this perspective as future work.

\vspace{1mm} Despite these limitations, our multi-attribute mitigation method opens to new avenues of research in the field of reputation-based ranking systems, with a clear connection to other retrieval systems.

\section{Impact on the Web Society}\label{sec:impact}
Inequalities based on gender or ethnicity are present in both the online and offline world~\citep{WachsHVD17,HannakWGMSW17,Thebault-Spieker15,GeKMZ16}. Focusing on the concept of reputation, it has been observed that, if in social offline situations the identity of a person is disclosed, the reputation of females is lower than that of males~\citep{jones2012gender}. The previously mentioned study has shown that gender biases in reputation can be removed by hiding the identity of a person, with females being considered as valuable as their counterparts. Our study shows that this is not the case with ranking algorithms, which can learn biased patterns even though the algorithm is not fed with sensitive attributes of the users. 

Our class of ranking systems considers automatically computed notions of user reputation. 
Platforms like StackOverflow associate explicit reputation scores to the users. Also in this case, for female users, a lower reputation can be observed~\citep{MayWH19}. While the authors have tried to explain this gap in reputation between different genders (e.g., by considering the difference in participation in the platform between males and females), 11\% of the reputation gap remains unexplained. Hence, regulating disparate reputation via our approach can have a positive impact on online platforms that work with explicit reputation formulations, thus avoiding the discrimination and biased representation of legally-protected groups. This is of paramount importance when disparities cannot be explained and mitigated, as in~\cite{MayWH19}.

Another phenomenon observed in the literature is that most of the studies consider only a single source of bias (e.g., gender or ethnicity), e.g., by showing that Wikipedia~\citep{WagnerGGM16} and both OpenStreetMaps and Google MapMaker~\citep{Stephens13} provide gender-biased representations of knowledge. However, humans are such complex beings, that even if we knew all the attributes that characterize a person, it would impossible for an algorithm to understand and support us in our entirety, considering all the nuances that make us who we are. Nevertheless, trying to cover as many perspectives as possible, as our algorithm tries to do, is better than considering a single perspective (e.g., gender). For this reason, we believe our approach is a good first step towards assessing and mitigating multiple forms of bias, associated with sensitive attributes of the users, to protect more vulnerable groups and minorities. Finally, our paper moves a step forward towards shaping a blueprint of the decisions and processes to be done, when multiple sensitive attributes need to be considered in reputation-based ranking systems.

\section{Conclusions}\label{sec:conc}
Mitigating bias in reputation-based ranking systems is of paramount importance to ensure that the whole community's preferences are reflected in the way items are ranked, without being biased against users' sensitive attributes.
Our study in this paper analyzed if mitigating reputation bias for two sensitive attributes individually (for any order) yields less biased reputations for both attributes. Our results uncovered that existing countermeasures do not guarantee this critical property. Based on this finding, we proposed a novel approach aiming to ensure reputation independence for multiple sensitive attributes simultaneously. Our experiments on real-world data showed that our mitigation can achieve the proposed goal. Moreover, we accomplished the envisaged goal without hindering essential system's qualities, e.g., ranking quality and robustness against attacks. 

Future work will embrace the findings and limitations of this study to drive research on unexplored ranking domains and disparate reputation measures, with positive impacts on social good. Specifically, we plan to conduct analyses of the impact on robustness to different demographic groups, and to design group-based metrics to assess the effectiveness of ranking systems.

\section*{Declarations}

\noindent\textbf{Funding} This research was partially supported by by the Portuguese Funda\c c\~{a}o para a Ci\^{e}ncia e a Tecnologia (FCT) through the FCT project RELIABLE, Portugal (PTDC/EEI-AUT/3522/2020), and by ACCIÓ, under project ``Privacy-preserving, Fair and Explainable Artificial Intelligence (PrEFair)''.
\\

\noindent\textbf{Conflicts of interest/Competing interests}
The authors confirm there are no conflicts of interest.
\\

\noindent\textbf{Availability of data and material} We used only datasets that a publicly available.
\\

\noindent\textbf{Code availability}
The code is published on \url{https://fenix.tecnico.ulisboa.pt/downloadFile/1407993358910329/Fair_Reputation_Based_Ranking.nb}\\
\\

\noindent\textbf{Ethics approval} The work uses publicly available and non-identifiable information of the users. No ethical approval was needed.
\\

\noindent\textbf{Consent to participate} Not applicable, since no human participant was involved in the evaluation of our study.
\\

\noindent\textbf{Consent for publication} Not applicable, since all datasets used in this study are released by third parties.
\\

\noindent\textbf{Authors' contributions}
G. Ramos contributed to the design of the solution, design of the algorithm, the coding of the solution, the evaluation of the approach, and the writing of the paper. 
L. Boratto contributed to the design of the solution, the analysis of the results, and the writing of the paper.
M. Marras contributed to the design of the solution, the analysis of the results, and the writing of the paper.

\bibliographystyle{spbasic_updated}      
\bibliography{sample_library}

\end{document}